\definecolor{setupColor}{HTML}{FFB570}
\definecolor{registerColor}{HTML}{67AB9F}
\definecolor{distributeColor}{HTML}{7EA6E0}
\definecolor{auditColor}{HTML}{EA6B66}
\definecolor{icrcdomain}{HTML}{7F00FF}
\definecolor{outsideActColor}{HTML}{666666}
\definecolor{showupt}{HTML}{E31A1C}
\definecolor{verifyentds}{HTML}{33A02C}
\definecolor{stationtophone}{HTML}{1F78B4}
\definecolor{phonetostation}{HTML}{A6CEE3}
\newcommand{\code}[1]{\textsf{#1}}
\newcommand{\parBF}[1]{\noindent\textbf{#1.}}
\newcommand{\stepsym}[1]{\circled{\scriptsize#1}}
\newcommand{\step}[1]{step~\stepsym{#1}}
\newcommand{\circled}[1]{\tikz[baseline=(char.base)]{\node[shape=circle,draw,inner sep=0.8pt] (char) {#1};}}
\newcommand{\game}[1]{\noindent\textbf{Game \ensuremath{G_{#1}}.}}
\newcommand{\icrcAbbr}{ICRC\xspace}
\newcommand{\icrcFull}{the International Committee of the Red Cross\xspace}
\def\addvalue#1#2{\expandafter\gdef\csname my@data@#1\endcsname{#2}}\def\usevalue#1{\csname my@data@#1\endcsname}
\newcommand{\requirement}[3]{\addvalue{#2}{#1}\hypertarget{#2}\noindent\textit{\usevalue{#2}{\tiny #2}: #3.}
}
\newcommand{\reqlink}[1]{\hyperlink{#1}{\usevalue{#1}\tiny{#1}}}
\newcommand{\param}{\ensuremath{\code{param}}\xspace}
\newcommand{\GlobalSetup}[1]{\ensuremath{\code{GlobalSetup}(#1)}\xspace}
\newcommand{\OnlyNameGlobalSetup}{\ensuremath{\code{GlobalSetup}}\xspace}
\newcommand{\SetupT}[1]{\ensuremath{\code{SetupT}(#1)}\xspace}
\newcommand{\OnlyNameSetupT}{\ensuremath{\code{SetupT}}\xspace}
\newcommand{\secl}{\ensuremath{1^{\ell}}\xspace}
\newcommand{\skrs}{\ensuremath{\code{sk}}\xspace}
\newcommand{\pkrs}{\ensuremath{\code{pk}}\xspace}
\newcommand{\SetupRS}[1]{\ensuremath{\code{SetupRS}(#1)}\xspace}
\newcommand{\pk}{\ensuremath{\code{pk}}\xspace}
\newcommand{\sk}{\ensuremath{\code{sk}}\xspace}
\newcommand{\Prepare}[1]{\ensuremath{\code{PrepareRegT}(#1)}\xspace}
\newcommand{\OnlyNamePrepare}{\ensuremath{\code{PrepareRegT}}\xspace}
\newcommand{\req}{\ensuremath{\code{request}}\xspace}
\newcommand{\res}{\ensuremath{\code{response}}\xspace}
\newcommand{\Process}[1]{\ensuremath{\code{ProcessRegRS}(#1)}\xspace}
\newcommand{\OnlyNameProcess}{\ensuremath{\code{ProcessRegRS}}\xspace}
\newcommand{\Finish}[1]{\ensuremath{\code{FinishRegT}(#1)}\xspace}
\newcommand{\OnlyNameFinish}{\ensuremath{\code{FinishRegT}}\xspace}
\newcommand{\stateT}{\ensuremath{\code{st}_{T}}\xspace}
\newcommand{\BlockList}{\ensuremath{\code{BL}}\xspace}
\newcommand{\entproof}{\ensuremath{\pi_{\text{ent}}}\xspace}
\newcommand{\VerifyDS}[1]{\ensuremath{\code{VerifyEntDS}(#1)}\xspace}
\newcommand{\OnlyNameVerifyDS}{\ensuremath{\code{VerifyEntDS}}\xspace}
\newcommand{\AuditProof}{\ensuremath{\pi_{\textrm{aud}}}\xspace}
\newcommand{\VerifyA}[1]{\ensuremath{\code{VerifyAudA}(#1)}\xspace}
\newcommand{\OnlyNameVerifyA}{\ensuremath{\code{VerifyAudA}}\xspace}
\newcommand{\ShowupT}[1]{\ensuremath{\code{ShowupT}(#1)}\xspace}
\newcommand{\OnlyNameShowupT}{\ensuremath{\code{ShowupT}}\xspace}
\newcommand{\stateIss}{\code{st}_{\code{iss}}}
\newcommand{\CardWhispering}[1]{\ensuremath{\code{CardWhispering}(#1)}\xspace}
\newcommand{\parampc}{\ensuremath{\code{param}_{PC}}\xspace}
\newcommand{\comcommit}{\code{Commit}}
\newcommand{\comgen}{\code{Com.Gen}}
\newcommand{\group}{\ensuremath{\mathbb{G}}}
\newcommand{\grouporder}{q}
\newcommand{\generator}{g}
\newcommand{\Zq}{\mathbb{Z}_{\grouporder}}
\newcommand{\Zp}{\mathbb{Z}_{\grouporder}}
\newcommand{\pedersengenh}{h}
\newcommand{\randin}{\in_R}
\newcommand{\bins}{\{0, 1\}}
\newcommand{\paramps}{\ensuremath{\code{param}_{\text{PS}}}\xspace}
\newcommand{\skps}{\ensuremath{\code{sk}_{\text{PS}}}\xspace}
\newcommand{\pkps}{\ensuremath{\code{pk}_{\text{PS}}}\xspace}
\newcommand{\nizk}{\textsf{NIZK}}
\newcommand{\dssign}{\code{Sign}}
\newcommand{\dsgen}{\code{Gen}}
\newcommand{\dsverify}{\code{Verify}}
\newcommand{\OraclePrepareReg}{\ensuremath{\mathcal{O}_{\code{PrepareReg}}}\xspace}
\newcommand{\OracleFinishReg}{\ensuremath{\mathcal{O}_{\code{FinishReg}}}\xspace}
\newcommand{\OracleShowUp}{\ensuremath{\mathcal{O}_{\code{Showup}}}\xspace}
\newcommand{\OracleShowUpTwo}{\ensuremath{\mathcal{O}_{\code{ShowupTwo}}}\xspace}
\newcommand{\OracleHonestReg}{\ensuremath{\mathcal{O}_{\code{HonestReg}}}\xspace}
\newcommand{\OracleMalUserReg}{\ensuremath{{\mathcal{O}_{\code{MalUserReg}}}}\xspace}
\newcommand{\OracleVerifyEnt}{\ensuremath{\mathcal{O}_{\code{VerifyEnt}}}\xspace}
\newcommand{\expbase}[1]{\ensuremath{\code{Exp}^{\text{#1}}_{\adv}}\xspace}
\newcommand{\expbaseb}[2]{\ensuremath{\code{Exp}^{\text{#1}}_{\adv, #2}}\xspace}
\newcommand{\expind}[1][b]{\expbaseb{IND}{#1}}
\newcommand{\expent}[1][b]{\expbaseb{ENT}{#1}}
\newcommand{\expaud}{\expbase{AUD}}
\newcommand{\expsec}{\expbase{SEC}}
\newcommand{\hBL}{\ensuremath{\code{h}_{\text{BL}}}\xspace}
\newcommand{\prf}[2]{\ensuremath{\code{PRF}_{#1}(#2)}\xspace}
\newcommand{\PRFtwo}{\code{PRF}}
\newcommand{\signatureAudit}{\ensuremath{\signature_{\textrm{aud}}}\xspace}
\newcommand{\Verify}[1]{\ensuremath{\code{Verify}(#1)}\xspace}
\newcommand{\stateTid}{\ensuremath{\code{st}_{T}^{(\id)}}\xspace}
\newcommand{\EpoDic}{\ensuremath{\epsilon^{\text{last}}}\xspace}
\newcommand{\Rev}{\code{Rev}\xspace}
\newcommand{\Ent}{\code{Ent}\xspace}
\newcommand{\BlockLists}{\code{BlockLists}\xspace}
\newcommand{\cmark}{\ding{51}}
\newcommand{\xmark}{\ding{55}}
\newcommand{\adv}{\ensuremath{\mathcal{A}}\xspace}
\newcommand{\advb}{\ensuremath{\mathcal{B}}\xspace}
\newcommand{\entseen}{\ensuremath{\code{ent}_{\text{seen}}}\xspace}
\newcommand{\kH}{\ensuremath{\code{k}_H}\xspace}
\newcommand{\kHi}[1]{\code{k}_{H,#1}}
\newcommand{\RevokeId}{\ensuremath{\code{v}_{H}}\xspace}
\newcommand{\RevokeVal}{\ensuremath{r_{H}}\xspace}
\newcommand{\ent}{\ensuremath{\code{ent}_{H}}\xspace}
\newcommand{\com}{\ensuremath{\code{Com}_{\textrm{ent}}}\xspace}
\newcommand{\signature}{\ensuremath{\sigma}\xspace}
\newcommand{\tagH}{\ensuremath{\tau_H}\xspace}
\newcommand{\CreateAuditProofDS}[1]{\ensuremath{\code{GenAudPiDS}(#1)}\xspace}
\newcommand{\OnlyNameCreateAuditProofDS}{\code{GenAudPiDS}\xspace}
\newcommand{\entsum}{\ensuremath{\code{ent}_{\textrm{sum}}}\xspace}
\newcommand{\rsum}{\ensuremath{\code{r}_{\textrm{sum}}}\xspace}
\newcommand{\entmax}{\code{ent}_{\textrm{max}}}
\newcommand{\rreal}{r_{\textrm{real}}}
\newcommand{\epo}{\ensuremath{\epsilon}\xspace}
\newcommand{\epox}{\ensuremath{\epsilon^*}\xspace}
\newcommand{\id}{\ensuremath{\code{id}}\xspace}
\newcommand{\transcript}{\ensuremath{\code{log}}\xspace}
\newcommand{\idzero}{\ensuremath{\code{id}_{0}}\xspace}
\newcommand{\idone}{\ensuremath{\code{id}_{1}}\xspace}
\newcommand{\idb}{\ensuremath{\code{id}_{b}}\xspace}
\newcommand{\lastepoch}{\ensuremath{\epsilon_{\text{last}}}\xspace}
\newcommand{\prob}{\textsf{Pr}\!}
\newtheorem{theorem}{Theorem}
\theoremstyle{definition}
\newtheorem{definition}{Definition}
\begin{document}

\title{Not Yet Another Digital ID: Privacy-Preserving Humanitarian Aid Distribution\\[2mm]
\footnotesize{\normalfont\emph{This is the full version of the conference paper published at the IEEE Symposium on
  Security and Privacy 2023\cite{WangLSGT23}.\\ This version includes the
  unabridged proofs. Please cite the conference version.}}\vspace{-2mm}}

\author{
  \IEEEauthorblockN{
    Boya Wang\IEEEauthorrefmark{1},
    Wouter Lueks\IEEEauthorrefmark{2},
    Justinas Sukaitis\IEEEauthorrefmark{3},
    Vincent Graf Narbel\IEEEauthorrefmark{3},
    Carmela Troncoso\IEEEauthorrefmark{1}
  }
  \IEEEauthorblockA{
    \IEEEauthorrefmark{1}SPRING Lab, EPFL,
    Lausanne, Switzerland \\
    \{boya.wang,carmela.troncoso\}@epfl.ch
  }
  \IEEEauthorblockA{
    \IEEEauthorrefmark{2}CISPA Helmholtz Center for Information Security,
    Saarbr\"ucken, Germany \\
    lueks@cispa.de
  }
  \IEEEauthorblockA{
    \IEEEauthorrefmark{3}International Committee of the Red Cross,
    Geneva, Switzerland \\
    dpo@icrc.org
  }
}

\maketitle

\thispagestyle{plain}
\pagestyle{plain}

\begin{abstract}
Humanitarian aid-distribution programs help bring physical goods to people in need. Traditional paper-based solutions to support aid distribution do not scale to large populations and are hard to secure. Existing digital solutions solve these issues, at the cost of collecting large amount of personal information. This lack of privacy can endanger recipients' safety and harm their dignity. In collaboration with the International Committee of the Red Cross, we build a safe digital aid-distribution system. We first systematize the requirements such a system should satisfy. We then propose a decentralized solution based on the use of tokens that fulfills the needs of humanitarian organizations. It provides scalability and strong accountability, and, by design, guarantees the recipients' privacy. We provide two instantiations of our design, on a smart card and on a smartphone. We formally prove the security and privacy properties of these solutions, and empirically show that they can operate at scale.
\end{abstract}

\begin{IEEEkeywords}
privacy-preserving technologies, privacy engineering, humanitarian aid distribution
\end{IEEEkeywords}

\section{Introduction}

Humanitarian organizations, such as \icrcFull (\icrcAbbr)~\cite{IcrcMission09}, aim to protect and assist the victims of violence, famines, and disaster.
One of their main operations is the distribution of physical goods, such as food or blankets, in emergency scenarios~\cite{Icrcecosec20}.

Traditionally, humanitarian organizations use paper-based systems to support aid-distribution, e.g., a list with recipients' information and allocation of goods, or paper vouchers valid for particular aid items. 
These approaches, while practical, have important shortcomings: searching for information on a paper list does not scale beyond a few hundred recipients, vouchers are easily faked, etc.

To address these shortcomings, humanitarian organizations are looking into easy-to-scale digital solutions to support their aid-distribution programs.
They are also aware that digitalization should be handled with care, as it brings new risks to the vulnerable populations they serve~\cite{KaspersenICRC16}.

Building a digital aid-distribution system that preserves the safety, rights, and dignity of humanitarian aid recipients requires a deep understanding of the humanitarian context.
We partner with the \icrcAbbr to learn the requirements and constraints associated with distributing aid in emergencies.
Our interactions reveal the following challenges:

\begin{enumerate}[left=0pt]
\item \textit{Secure household-oriented aid.} Aid-distribution systems must permit aid allocation per household (i.e., a domestic unit of several members sharing meals and income), yet they must ensure that households can only request aid once per distribution round (e.g., per month).
\item \textit{Avoid reliance on powerful hardware and connectivity}. Most aid-distribution programs take place in crisis-affected settings where we cannot assume the existence of last-generation hardware or internet connectivity.
\item \textit{Auditability}. For accountability reasons, humanitarian organizations need to prove that aid is distributed in an honest manner, i.e., only to legitimate recipients.
\item \textit{Strong privacy.} Aid-distribution systems must avoid causing digital harm to the individuals~\cite{Noharm20}. The system must avoid generating databases with recipients' data and creating digital traces related to recipients' actions.
\end{enumerate}

Existing digital aid-distribution solutions can address the first three challenges.
Often, they achieve this by integrating an Identity Management System into their solution~\cite{Jointwfp15}.
This creates (central) databases with the personal data of recipients -- and even more sensitive information if the program requires strong authentication, such as the UN Refugee Agency's biometric identification system for refugees~\cite{Unhcr15}, or Pakistan's biometric-based Watan Card~\cite{PakistanWatanCard22}.

These solutions' reliance on data, however, conflicts with the strong need for privacy (challenge 4) and makes them ill-suited for the highly-sensitive humanitarian context~\cite{Aidingsur13}.
Not only can they jeopardize the safety of recipients~\cite{Afghanbiometrics22,Genocide01}, but they may also complicate the relationship of humanitarian organizations with local authorities, which shades the neutrality of the humanitarian actors~\cite{TAYLOR2015229}.
For example, in Yemen, the World Food Program clashed with Houthi authorities because of the disagreement over the usage and control of biometric data~\cite{YemenMaria21}.
Finally, from an ethical perspective, it is questionable whether gathering personal information of vulnerable people is acceptable given the risks that it entails for them~\cite{HayesICRC19,TAYLOR2015229}.

The reliance of data in existing systems is inherent to their approach to prevent distribution to illegitimate recipients and to ensure accountability, mainly based on centralizing the collection of logs. 
Privacy risks in such centralized solutions can only be avoided by using expensive, complex cryptography. 
To be able to address the four challenges simultaneously in a more efficient manner, we design an aid-distribution system that does not require centralizing data to achieve the desired properties.
Concretely, our contributions are the following: 
\begin{itemize}
\item We propose a token-based aid-distribution system that is secure, privacy-preserving, auditable, and that can operate with little to no connectivity. 
    \item We instantiate this system into two solutions that address the four challenges. 
    Both solutions (i) ensure that even if multiple tokens have been assigned to the same household, a household can receive aid only once per distribution round (challenge 1) 
    (ii) do not need to reveal any other information about households beyond their entitlement to request aid (challenge 4); 
    and (iii) produce privacy-preserving audit proofs to guarantee auditability (challenge 3).
    The first solution addresses challenge 2 by using smart cards as tokens, and the second solution reduces cost by using recipients' smartphones as tokens when aid is distributed in areas where such devices are available.
    \item For each solution, we prove that the protocols we propose fulfill the security and privacy requirements of the \icrcAbbr, and we empirically demonstrate that they can operate at scale.
    \item We discuss deployment considerations required to bring secure and privacy-preserving digital solutions to the humanitarian setting. 
\end{itemize}

\section{Aid Distribution in Humanitarian Context}

In this section, we describe how aid distribution works within the \icrcAbbr and elicit the requirements that a digitally supported aid-distribution system must fulfill.

\subsection{Requirements Gathering}
\label{sec:requirements}

To understand the needs of the \icrcAbbr, we worked closely with staff from the \icrcAbbr Data Protection Office.
We also organized two dedicated workshops with staff who are experienced in field operations, and held meetings with the staff in charge of organizing and coordinating aid-distribution programs.
We refined the requirements in weekly meetings held with the Data Protection Office for more than a year.

\vspace{1mm}\parBF{Functional requirements}
Humanitarian organizations require the following functionality for an aid-distribution system to be suitable in their context:

\requirement{F1}{household}{Distribution per household}
In some programs, humanitarian aid is allocated to individuals.
However, we find that often the entitlement to aid is decided based on the \emph{needs of households}~\cite{Icrcecosec20}.
To this end, at least one member of the household must register with the \icrcAbbr to collect aid (see requirement \hyperlink{robust}{D3\tiny{robust}} below).
In this paper, we focus on per-household distribution, as any household-oriented solution can be trivially adapted to per-individual entitlement by equating households with individuals.

\requirement{F2}{modify}{Entitlement modification after registration}
In general, the \icrcAbbr expects the allocation to a household to not change during the aid-distribution program.
Yet, aid recipients' needs may change through time, e.g., due to births, deaths, migration, or marriages.
The \icrcAbbr must be able to modify the allocation to a household at any moment.

\requirement{F3}{periodic}{Periodic distribution}
In many cases, one-time distribution is enough to respond to an emergency. 
However, some aid-distribution programs are long-term, and the distribution of aid is divided into several \emph{distribution periods}~\cite{IcrcAnnualAfrica21}.
Recipients receive aid once per period (e.g., getting five bags of rice per month). 
In this situation, it is desirable that a household, through their representative(s), can get aid periodically without having to register every time.

\vspace{1mm}\parBF{Deployment requirements} 
Humanitarian organizations operate in extreme settings. These conditions constrain the aid-distribution system design space, as well as the technologies that can be used.

\requirement{D1}{low}{Low-end hardware and sparse connectivity}
The \icrcAbbr may operate in areas where neither high-end hardware nor stable connectivity is available.
For example, recipients may not be able to afford high-end hardware, or the weather conditions may prevent the use of hardware that may be seen as a commodity in Western societies (e.g., in some locations the temperature or humidity can be too high for the latest models of smartphones). 
Therefore, aid-distribution systems should require only low-end hardware that is available in developing countries, be functional with little connectivity to the internet, and be reliable in austere environments.

\requirement{D2}{scale}{Efficiency at medium scale}
The \icrcAbbr distributes goods in unstable regions~\cite{Icrcecosec20}. 
Letting a lot of recipients wait for a long time in such areas may put at risk these recipients, as well as the \icrcAbbr staff and collaborators, e.g., creating a high-value target for terrorism.
In addition, queuing for basic human needs such as food may be questionable with respect to human dignity. Therefore, it is vital for the distribution system to be efficient even for numerous recipients. 
From our conversation with the \icrcAbbr, aid-distribution programs tend to involve thousands of recipients and, in some occasions, can involve more than 300,000 households per year. 

\requirement{D3}{robust}{Robust distribution}
Getting aid is critical for the people in need, so the system must guarantee distribution even when unexpected events occur. 
The \icrcAbbr staff reports two common situations in which robustness is needed. 
First, some recipients lose or accidentally damage the proof of registration and entitlement (e.g., a voucher or an aid-distribution card). 
Second, registered recipients may not be able to attend the distribution (e.g., in case of sickness or travel). 
The system should provide means for recipients to receive a new proof or registration, and it is desirable that more than one household member can collect the goods.

\requirement{D4}{usability}{Usability}
Aid distribution programs often take place in locations where there is a lack of digital literacy. 
This limits the type of solutions that humanitarian organizations can deploy, as not all technologies can be understood and used by the recipients.

We do not directly address this challenge in this work, but we choose smart cards and smartphone as platforms to develop our solution since both kind of devices have already been used by humanitarian organizations in the field.

\vspace{1mm}\parBF{Security requirements}
Humanitarian organizations are funded by voluntary contributions and there is no guarantee that such contributions will continue long-term~\cite{icrcfunding22}.
To ensure the continuity and the stability of fundraising, donors must trust the distribution system.
To promote trust, aid-distribution systems must include mechanisms to ensure that aid is only distributed to their legitimate recipients in the allocated quantity.

\requirement{S1}{limit}{Recipients should not be able to request more goods than they are entitled to} Legitimate recipients may try to get more aid than their household is entitled to (e.g., by requesting their entitlement more than once).
To ensure that the limited resources of the \icrcAbbr can be used to aid as many people as possible, the aid-distribution system should guarantee that recipients cannot request more goods than those established by the criteria.
We note that no technological solution can ensure that recipients do not request less goods (e.g., they may not attend the distribution).

\requirement{S2}{legitimate}{Only legitimate recipients should be able to obtain aid} Illegitimate recipients may try to get aid, e.g., by trying to impersonate a legitimate recipient.

\requirement{S3}{auditS}{Distribution must be auditable} To promote accountability of aid distribution programs, humanitarian organizations often keep track of their assistance activities by systematically collecting proof-of-delivery documents showing that goods are actually transferred from the organization to the recipient. Distribution tracking must enable auditors to find inconsistencies between the amounts distributed and the amounts delivered to eligible recipients in the field.

\vspace{1mm}\parBF{Privacy requirements} 
In the humanitarian context, recipients of aid rarely have any control or choice over which distribution system they use to get aid.
Any leakage of recipient's personal information, during operation of the program, or afterwards if the humanitarian organization is forced to leave the area without cleaning up the distribution system, can can put the recipients in danger. 
For example, systems built by Western actors in Afghanistan collected sensitive biometric data. 
In August 2021, when those actors had to leave the country, these sensitive data were left in the hands of the Taliban, raising concerns that these biometrics could be used to target political opponents~\cite{Afghanbiometrics22}. 
Hence, distribution systems in humanitarian settings must protect recipients' sensitive data to prevent any harm that could result from leaking these data.

\requirement{P1}{registration}{Privacy at registration}
Registration stations unavoidably learn some information about recipients to determine their eligibility to receive aid.
Such information could include place of residence, household demographics, economic situations as well as identities.
These necessary data cannot be hidden during registration. Yet, registration stations do not need to know transactional data of which recipients receive aid when and where.

\requirement{P2}{distribution}{Privacy at distribution}
Distribution stations must learn the eligibility and entitlement of a recipient in order to provide a recipient with the correct items.
Yet, because distribution stations are often run by third parties, the stations should not learn any other information about recipients beyond what is required to provide the assistance.
  
\requirement{P3}{auditP}{Privacy at auditing}
Auditors receive audit logs to verify the correct functioning of distribution stations. Such logs may contain sensitive information about recipients (e.g., when or how often a specific recipient received their goods).
To avoid endangering recipients, the system should ensure minimal disclosure of recipients' sensitive information to auditors. In most cases, it suffices to provide the total amount of distributed aid over a given time window and a proof that this aid was given to legitimate recipients.

\requirement{P4}{biometrics}{Privacy of biometrics}
Some aid-distribution systems use biometrics to ensure non-transferrability of aid. Biometric data are extremely privacy-sensitive. Because of privacy concerns, the \icrcAbbr's biometrics policy specifies that biometric templates must be stored on a device held by the data subject (i.e., aid recipient) whenever possible~\cite{Icrcbiopolicy19}. In particular, storing biometrics in central database held by the \icrcAbbr or service provider (e.g., registration or distribution stations) is considered as highly risky and must be avoided~\cite{Icrcbiopolicy19}. 

\subsection{Humanitarian Aid Distribution Workflow}

In this section, we provide a high-level description of a typical aid distribution workflow. We consider four actors whose interactions we illustrate in Fig.~\ref{fig:sys_view}:
\begin{itemize}
  \item \textbf{Recipients.} Recipients are those individuals that receive aid. We assume that recipients belong to one (and only one) household.
  We say an individual is a \textit{legitimate} recipient for a household if they are registered for getting aid of that household; and we say they are \textit{illegitimate} recipients if they are not registered in the system, or not associated to the household for which they receive aid.
  \item \textbf{Registration Station.} The registration station is in charge of enrolling legitimate recipients and determining their entitlement. Registration data are usually handled directly by the \icrcAbbr, or other trusted local parties. As such, the registration stations usually rely on the privileges and immunities afforded to the \icrcAbbr~\cite{BlondCTJFH18}.
  \item \textbf{Distribution Station.} Distribution stations are in charge of distributing aid to legitimate recipients. Distribution stations are often operated by third parties.
  \item \textbf{Auditors.} Auditors are potentially external parties that use audit records produced by distribution station to verify the honest behavior of these stations.
\end{itemize}

\begin{figure}[tbp]
  \centering
  \includegraphics[trim=3cm 5cm 3cm 5cm, width=0.8\columnwidth]{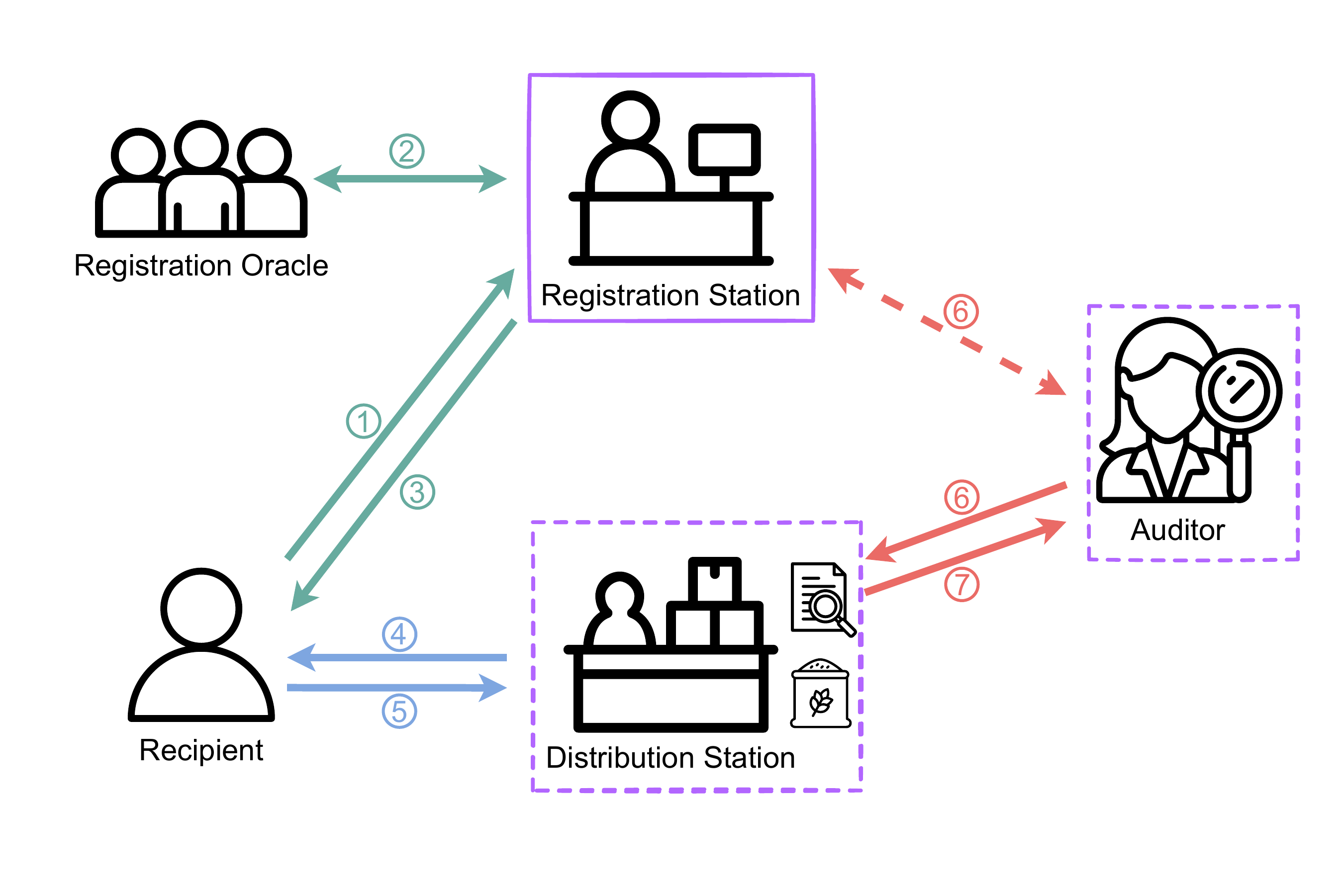}
\caption{Workflow of humanitarian aid-distribution. Recipients interact with the registration station. The registration station relies on a registration oracle to assess correctness of the information. Recipients then request aid from the distribution station. The auditor can ask the distribution station for transaction records. The humanitarian organization controls the registration station, while the distribution station and the auditor can be under the control of third parties. }
  \label{fig:sys_view}
\end{figure}

Given an emergency situation, the \icrcAbbr works with local communities to set the criteria for selecting households that are eligible for receiving aid.
Then, the organization registers the legitimate recipients for those households together with their \emph{entitlement} (e.g., five bags rice per month)~\cite{IcrcEcosecTarget} (Fig.~\ref{fig:sys_view}, \step{1}). 
Registration can happen, among others, by setting up a registration desk in the emergency area, or by having the \icrcAbbr staff visit households and register their entitlement. 
Regardless of where registration happens, we will refer to the process as happening at the registration station.
During registration, the registration station may need to validate the eligibility of households and their aid request, and that recipients are only registered for one household. This is typically done with the help of local communities (e.g., consulting with village elders)~\cite{Icrccashguide07, Icrcecosec20, Icrcevidence21, schwartz19, worldbankassess}. In this paper, we abstract the validation process as a registration oracle that can assign recipients to households and transfers this knowledge to the registration station (Fig.~\ref{fig:sys_view}, \step{2}).
After consulting the oracle, the registration station returns the result (e.g., a voucher, a card) to the recipient (Fig.~\ref{fig:sys_view}, \step{3}). 

Distribution of aid happens at \emph{distribution stations}. These may share the same location as the registration station.
In practice, there can be several registration and distribution stations.
For simplicity, in this paper we abstract them as a single station and discuss how to extend our solution to multiple sites in Sect.~\ref{sec:practical}.

Recipients request aid on behalf of the household at the distribution station (Fig.~\ref{fig:sys_view}, steps \stepsym{4} and \stepsym{5}).
The distribution station verifies that recipients are legitimate and hands over the goods according to the recipient's entitlement.

When necessary, auditors can use information from the registration and distribution stations to verify that the aid distribution was done honestly (Fig.~\ref{fig:sys_view}, steps \stepsym{6}  and \stepsym{7}).

\parBF{Threat model}
With respect to security of aid distribution (\reqlink{limit}, \reqlink{legitimate}) we assume that the registration and distribution stations are honest.
This assumption is essential.
A malicious registration station can simply register illegitimate recipients and change their entitlement, and a malicious distribution station could distribute however much aid to whomever it chooses.
We assume (potentially illegitimate) recipients are malicious and aim to violate these security properties.
For auditability (\reqlink{auditS}) we assume a malicious distribution station that tries to pass the audit check, but assume the registration station is honest.

With respect to privacy of registration data, we must trust the registration station (recall, these data are needed to verify eligibility).
All other parties -- distribution stations, auditors, and other recipients -- are considered malicious.
For privacy of transaction records and privacy of biometric data (\reqlink{biometrics}) we assume that all parties (except the recipient themselves) are malicious.

\subsection{Existing Systems and Limitations}

We review existing aid-distribution systems and evaluate to what extent they fulfill the above requirements.

\parBF{Paper-based} The \icrcAbbr reports that, currently, most aid-distribution systems are paper-based.
We identify two kinds of paper-based systems which differ in how the information used at distribution is stored.

In \emph{pen-and-list} solutions, the information about eligible households, their members, and their entitlement, is stored in one \emph{centralized} paper list during registration, and this list is passed to the distribution station.
When recipients show up at the distribution station to claim their goods, the staff checks their identity to find them on the list.
Recipients can sign their names as the evidence of distribution, producing a proof-of-delivery to be used for auditing purposes.

Pen-and-list solutions satisfy the functionality requirements (\reqlink{household}, \reqlink{modify}, \reqlink{periodic}).
However, the verification of recipients at the distribution station does not scale (\reqlink{scale}).
It can take long time to find a name among many others, especially when names are in languages for which the distribution staff are not native or when names are handwritten.
In addition, if the list is acquired by third parties at registration, or during or after the distribution, \emph{all} information about \emph{all} registered participants would be revealed, breaching the privacy requirements (\reqlink{distribution}, \reqlink{auditP}).

In \emph{pen-and-voucher} solutions, the information about eligibility, household membership, and entitlement is \emph{decentralized} in vouchers given to the recipients. 
The distribution station takes as input the information from recipients and, upon verification, it delivers the goods.
The staff at the distribution station can mark the voucher (e.g., by punching a hole) to show that the aid has been given in the corresponding month.

Compared to using the pen-and-list approach, which enables easy modification of entitlement, the decentralized nature of the pen-and-voucher solution means modification needs the assistance from recipients who may not be willing to cooperate (\reqlink{modify}).
Moreover, because information about distribution is only recorded on the voucher, auditing would require to physically gather vouchers that would act as proof-of-delivery (\reqlink{auditS}). 
Gathering voucher for auditing can be difficult, e.g., when using punch-cards that are carried by recipients.
Even when vouchers can be easily collected, there may be personal information written or printed on the voucher, e.g., to make sure only legitimate recipients can use them (\reqlink{legitimate}).
This information may breach privacy at auditing (\reqlink{auditP}).

\parBF{Digital solutions} Aiming to improve scalability and auditability of paper-based systems, humanitarian organizations are exploring digital solutions.
Digital solutions are easy to scale and can easily provide detailed logs for auditing. 
Depending on their implementation they can have different drawbacks.
Solutions that digitalize vouchers in a straightforward manner suffer from the same problem: anyone with a (digital) voucher can request the goods even though the voucher does not belong to this person (\reqlink{legitimate}). 
To solve this problem, the distribution station needs to authenticate the recipient. 
For this purpose, a lot of aid-distribution systems integrate Identity Management systems (IdM)~\cite{Aidingsur13} and strong authentication solutions such as biometrics. 
In this way, the distribution station can verify ownership before providing the goods. 
While this resolves the legitimacy requirement, IdM-based solutions do not fulfill the privacy requirements (\reqlink{registration}, \reqlink{distribution}, \reqlink{auditP}, \reqlink{biometrics}): recipients' behavior is linkable, the distribution station can identify recipients, auditing records can leak personal information, and the central biometric database becomes a single point of failure for privacy.

\section{A Token-based Aid-distribution System}\label{sec:sys}
Next, we outline our proposal for a digital humanitarian aid-distribution system that fulfills the requirements from the \icrcAbbr as shown in Sect.~\ref{sec:requirements}. 

\subsection{Design choices} 
\parBF{Decentralization of information into digital tokens}
To avoid requiring high levels of trust on a single entity that collects all recipients' information, we store recipients' eligibility, entitlement information, and potentially authentication-oriented data on digital tokens that stay with the recipient.
We trust tokens to keep this information private.

Using a token enables us to decentralize recipients' information.
This in turn enables the design of a system in which the distribution station does not obtain information that allows to re-identify, or track the movements of, recipients (\reqlink{distribution});
and in which we can create auditing information without the need to trust the distribution station (\reqlink{auditS}).
If the system requires the use of authentication information such as biometrics, these can be stored on the token, avoiding the creation of a centralized database (\reqlink{biometrics}).
We discuss the use of biometrics in Sect.~\ref{subsec:auth}.

The use of \emph{digital} tokens removes the need to manually check eligibility at distribution (e.g., finding names on a list or checking information on a paper voucher), increasing the efficiency of the system (\reqlink{scale}).

\parBF{Enabling offline use of tokens}
To fulfill \reqlink{low}, we designed our protocols to work offline. Tokens communicate locally with registration and distribution stations and do not require Internet access. Throughout this paper, we assume that the communication between tokens and stations is encrypted and authenticated. As this can be achieved using standard techniques on top of our protocols, e.g., during session establishment, we omit channel encryption and authentication from the description of our systems.

\parBF{Modifying information via revocation and re-issuance}
Modification of information in a decentralized system is cumbersome, as once information is stored on a recipient's token the registration station has no access to it.
Since modifications are rare (\reqlink{modify}), we solve this problem by revoking tokens and re-issuing new ones with up-to-date information.
We choose to implement blocklist-based revocation rather than allowlist-based revocation~\cite{CamenischL02} because an allowlist-based method requires the knowledge of all the valid credentials, which may not be feasible in the field. 
In our design, the token receives the blocklist from the distribution station.

\parBF{Preventing double dipping using household tags}
A household must not be able to obtain aid more than once per round (\reqlink{limit}). At first thought, the use of strong authentication at distribution to detect recipients requesting aid more than once may seem to address this requirement. However, this does not prevent different members of the household from requesting the household-allocated aid. To ensure that different members of the same household can not obtain aid more than once per round, tokens output a household- and round-specific pseudorandom tag which they provide to the distribution station upon aid collection. The distribution station stores these tags. If another household member attempts to collect aid they must reveal their household tag for the round. Thus, the distribution station can easily detect the double-dipping attempt. We note that this mechanism does not require identifying recipients.

\parBF{Non-forgeable auditing combined with crosschecking}
Recall that to detect whether there have been goods not actually transferred from the organization to the recipient, the auditor should be able to find inconsistencies between the audit records and the warehouse storage.
We model this by enabling the distribution station to prove the total amount of goods it was authorized to deliver to legitimate recipients. To prove this, we require tokens to output non-forgeable records at distribution that can be seen as proofs-of-delivery. These records can be aggregated, becoming a proof-of-delivery of the total amount of goods given out by the station. By comparing the aggregated total to the warehouse records, auditors can detect when the station distributes more goods than those that recipients asked for in the round (e.g., the station distributed also to illegitimate recipients or gave extra goods to legitimate recipients). 
When doing this comparison, auditors need to consider that numbers may not coincide for legitimate reasons (e.g., goods get broken before delivery, or some goods need to be distributed in an emergency with no time for proof collection). These cases cannot be addressed by technology and need to be resolved using analog means (e.g., with manual annotations) as in the current system.
We also note that our auditing solution cannot detect the transfer of goods between legitimate recipients.

\subsection{A Token-based Scheme}\label{subsec:token}

\begin{figure*}[tbp]
    \centering
    \includegraphics[trim=4cm 2.5cm 1.5cm 6cm, width=\textwidth]{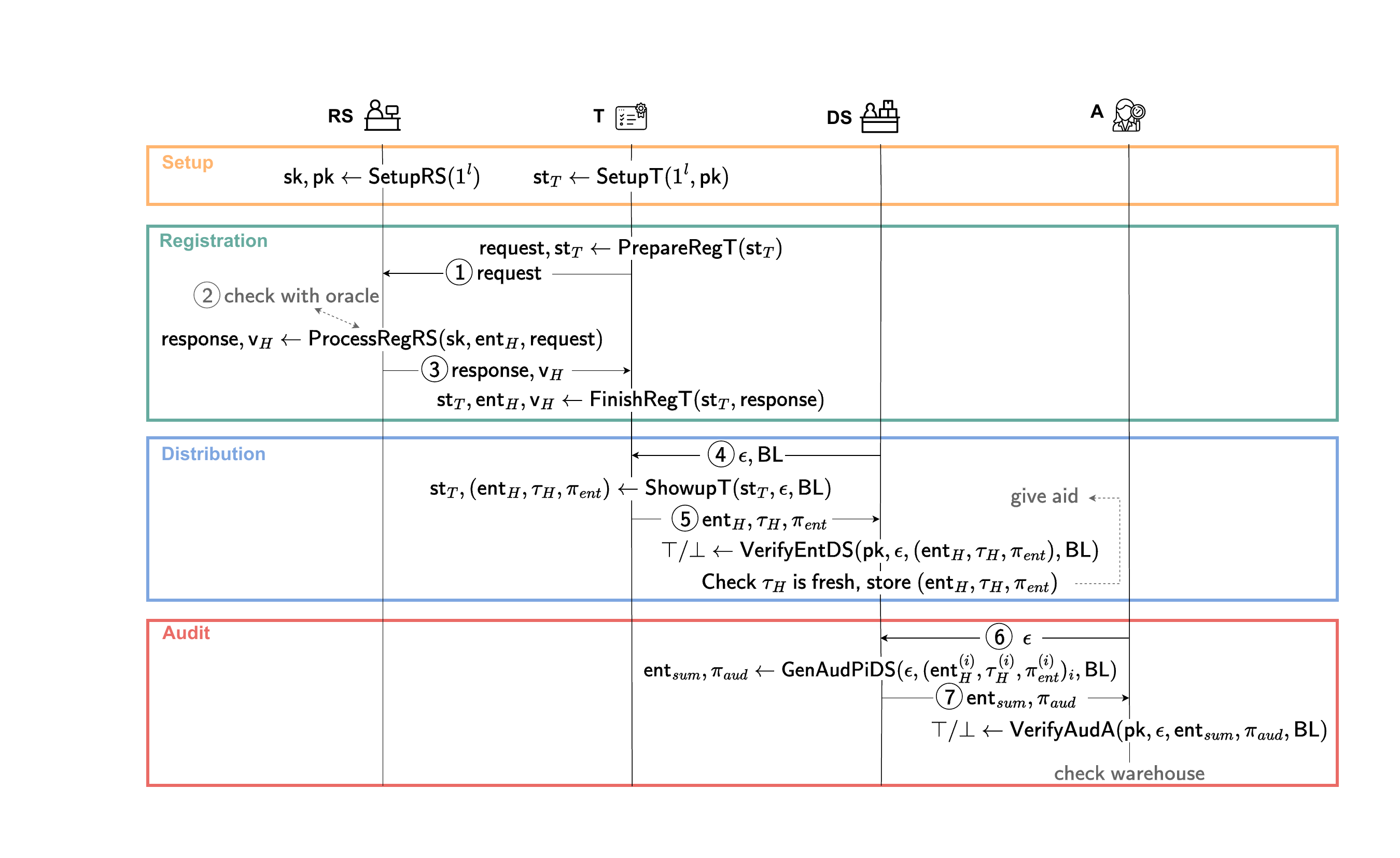}
    \caption{Overview of the token-based scheme omitting the global setup \OnlyNameGlobalSetup. From left to right, there are four parties: Registration Station (RS), Token (T), Distribution Station (DS), and Auditor (A). The boxes from top to bottom represent four phases: \textcolor{setupColor}{setup phase}, \textcolor{registerColor}{registration phase}, \textcolor{distributeColor}{distribution phase}, and \textcolor{auditColor}{auditing phase}. The circled numbers match steps from Fig.~\ref{fig:sys_view} to the scheme. \textcolor{outsideActColor}{Actions happening outside the scheme are marked in gray.}}
    \label{fig:scheme}
\end{figure*}

Our token-based scheme proceeds in four phases -- setup, registration, distribution, and auditing -- each consisting of several algorithms, whose syntax we describe below.
Fig.~\ref{fig:scheme} shows how parties use these algorithms in an aid-distribution system.
If a party aborts while running an algorithm, every field of the output is set to $\bot$.

\parBF{Setup Phase}
To set up the system, a trusted party generates cryptographic parameters that are shared by all parties.
The registration station generates a key pair $(\sk,\allowbreak \pk)$. Tokens are stateful, and token-specific algorithms update an explicit state $\stateT$ that is initialized during setup.
The algorithms satisfy the following syntax.
\begin{itemize}
    \item $\param \leftarrow \GlobalSetup{\secl}.$
    A trusted party takes as input the security parameter $\secl$, and returns the public parameters \param. The public parameters \param are implicit inputs to all other algorithms. 
In practice, these parameters would be obtained from well-known recommendations by experts.
    \item $\sk, \pk \leftarrow \SetupRS{\secl}.$
    The registration station takes as input the security parameter \secl, and outputs a private-public key pair $(\sk,\allowbreak \pk)$. The public key $\pk$ is known to all parties.
    \item $\stateT \leftarrow \SetupT{\secl, \pk}.$
    The token takes as input the security parameter \secl and the public key \pk. The token returns an initial state \stateT. 
\end{itemize}

\parBF{Registration Phase}
Recipients use a token to register in the system. This token can either be provided to them (e.g., a smart card) or be brought by the recipient (e.g., a smartphone).
To start registration, the token sends a registration request to the registration station together with a description of the household (Fig.~\ref{fig:scheme}, \step{1}).
If the registration oracle agrees that this household is eligible and the recipient is a member of the household, the oracle outputs an entitlement $\ent$ (Fig.~\ref{fig:scheme}, \step{2}). 
The registration station computes the response and sends it back to the token (Fig.~\ref{fig:scheme}, \step{3}).
The algorithms satisfy the following syntax.

\begin{itemize}
    \item $\req,\allowbreak \stateT \leftarrow \Prepare{\stateT}.$
    The token takes as input the internal state \stateT and outputs request information \req with a new state $\stateT$.
    \item $\res,\allowbreak \RevokeId \leftarrow \Process{\sk,\allowbreak \ent,\allowbreak \req}.$
    The registration station takes as input a private key \sk,
    the entitlement \ent of the household, and the request \req. It outputs a response \res, which includes information about entitlement, and a revocation value $\RevokeId$ for this household.
    \item $\stateT, \ent, \RevokeId \leftarrow \Finish{\stateT, \res}.$
    The token finishes registration by taking as input the token state \stateT and the registration response \res.
    It returns a new state $\stateT$, the entitlement \ent, and the revocation value $\RevokeId$. For simplicity, we assume that \ent is a scalar, but our scheme can trivially extend to vectors to support different types of goods.
\end{itemize}

The registration station stores the mapping between households and corresponding revocation values $\RevokeId$. To revoke a household's tokens, the registration station adds $\RevokeId$ to a public blocklist $\BlockList$.

\parBF{Distribution Phase} To receive aid, recipients go to the distribution station and start a request using their token. 
The station sends the current period $\epo$ and blocklist $\BlockList$ to the token (Fig.~\ref{fig:scheme}, \step{4}).
The token ensures the period $\epo$ is not in the past and verifies that its own revocation value is not in $\BlockList$. Then, the token sends the entitlement $\ent$, a period-specific tag $\tagH$, and a proof of entitlement $\entproof$ to the distribution station (Fig.~\ref{fig:scheme}, \step{5}).
The station checks the proof of entitlement to verify that $\ent$ is correct and that the token has not been blocked (\OnlyNameVerifyDS); and checks that the tag $\tagH$ has not been seen before.
If everything is correct, the station staff hands over the goods to the recipient, and the distribution station stores the tuple $(\ent, \tagH, \entproof)$ for auditing purposes.
The algorithms satisfy the following syntax. \looseness=-1

\begin{itemize}
    \item $\stateT,\allowbreak (\ent,\allowbreak \tagH,\allowbreak \entproof) \leftarrow \ShowupT{\stateT,\allowbreak \epo,\allowbreak \BlockList}.$
    The token takes as input its own state \stateT, the distribution period \epo, and a blocklist \BlockList.
    The algorithm outputs an updated state $\stateT$, the entitlement \ent, a tag \tagH that is unique for this household in this period, and a proof of the entitlement \entproof.
    \item $\top/\bot \leftarrow \VerifyDS{\pk, \epo, \allowbreak (\ent,\allowbreak \tagH,\allowbreak \entproof),\allowbreak \BlockList}.$
    The distribution station takes as input the public key $\pk$, the period $\epo$, the tuple $(\ent, \tagH, \entproof)$ from the token, and the blocklist \BlockList. The algorithm outputs $\top$ if \entproof verifies, $\bot$ otherwise.
\end{itemize}

\parBF{Audit Phase}
To audit transactions, the auditor requests an audit proof. The distribution station computes the transaction total $\entsum$ as well as a proof $\AuditProof$ and sends it to the auditor  (\OnlyNameCreateAuditProofDS and Fig.~\ref{fig:scheme}, \step{7}). The auditor uses the corresponding blocklist $\BlockList$ to verify the proof $\AuditProof$ (\OnlyNameVerifyA). The algorithms satisfy the following syntax.

\begin{itemize}
\item $\entsum,\allowbreak \AuditProof\leftarrow \CreateAuditProofDS{\epo,\allowbreak (\ent^{(i)},\allowbreak \tagH^{(i)},\allowbreak \entproof^{(i)})_i,\allowbreak \BlockList}.$
    The distribution station takes as input the period \epo, tuples $(\ent^{(i)}, \tagH^{(i)}, \entproof^{(i)})_{i}$, and a blocklist \BlockList.
    It outputs the total entitlement \entsum and an audit proof \AuditProof.
    \item $\top/\bot \leftarrow \VerifyA{\pk,\allowbreak \epo,\allowbreak \entsum,\allowbreak \AuditProof,\allowbreak \BlockList}.$
    The auditor takes as input the period \epo, the total entitlement \entsum, the audit proof \AuditProof, and the blocklist \BlockList that was used for these records.
    The algorithm outputs $\top$ if \AuditProof verifies, $\bot$ otherwise.
\end{itemize}

\section{Smart-card-based Aid Distribution}\label{sec:card}

We propose the first instantiation of the design in Sect.~\ref{sec:sys} which uses smart cards as tokens.

\parBF{Cheap, trustworthy tokens}
Smart cards are low-cost enough that humanitarian organizations can give one or multiple cards to a household (\reqlink{household}, \reqlink{low}, \reqlink{robust}).
Yet, smart cards have enough storage and computation capacity to perform the cryptographic operations we need. And they can be treated as trusted execution environments (TEEs), leading to a simple design. 
Specifically, we assume attackers cannot modify or observe computations inside the card. 

\parBF{Enabling privacy-friendly auditing}
Yet, the trusted nature of smart cards does not directly translate into privacy-friendly auditability (\reqlink{auditS}):
The distribution station must be able to convince the auditor that it only interacted with cards of legitimate receivers without violating recipient privacy.
The key idea in our design is that during distribution the smart card signs a \emph{homomorphic commitment to the entitlement}, and reveals its opening to the distribution station.
The distribution station can compute a commitment to the total entitlement obtained by homomorphic addition of all individual commitments.
The auditor can validate the signatures on the individual commitments, and check that the total entitlement matches the combined commitment, given an opening produced by the distribution station.

\parBF{Enabling multiple cards per household}
Preventing double dipping (\reqlink{limit}) with a \emph{single card} per household is easy -- the (trusted) card can remember which periods it successfully completed. However, for \reqlink{robust} our system requires multiple cards per household. 
To enable the detection of double dipping, cards output a per-household per-round tag $\tagH$ that they compute using a household secret $\kH$ and the current round $\epo$.
This household secret needs to be present in all the cards of the household.
We propose a \emph{card-whispering} protocol that enables ``cloning'' of cards.

\subsection{Preliminaries}

We use a digital signature scheme given by the algorithms $(\dsgen, \dssign,\dsverify)$. The key-generation algorithm $\dsgen(\secl)$ takes as input a security parameter $\secl$ and outputs a privacy-public key-pair $(\sk, \pk)$. The signing algorithm $\dssign(\sk, m)$ takes as input the private key $\sk$, a message $m$, and outputs a signature $\signature$. The verification algorithm $\dsverify(\pk, m, \signature)$ outputs $\top$ if $\signature$ is a valid signature on $m$ and $\bot$ otherwise.

We denote by $\PRFtwo_k: \{0, 1\}^{n} \to \{0, 1\}^{n}$ a pseudorandom function with key $k \in \{0, 1\}^n$. We assume the output of this function is indistinguishable from a random function.

We use the additively-homomorphic Pedersen commitment scheme~\cite{TorbenNoninteractive91}, defined by the algorithms $\comgen$ and $\comcommit$. The function $\comgen(\secl)$ takes as input the security parameter $\secl$ and outputs parameters $\parampc = (\group, \grouporder, \generator, \pedersengenh)$, where $\group$ is a cyclic group of prime order $\grouporder$ generated by $\generator$, and $\pedersengenh$ is another generator of $\group$ obtained by hashing a public string to a group element (this ensures that the discrete logarithm of $\pedersengenh$ with respect to $\generator$ is unknown to all parties). The function $\comcommit(m, r)$ takes as input a message $m \in \Zp$ and randomizer $r \in \Zp$ and outputs a commitment $c = \generator^m \pedersengenh^r$. This commitment scheme is additively homomorphic: $\comcommit(m_1, r_1) \cdot \comcommit(m_2, r_2) = \comcommit(m_1 + m_2, r_1 + r_2)$.

\subsection{Smartcard-based Aid Distribution Protocol} \label{sec:sc-based}
We define the smart card system by instantiating the scheme from Sect.~\ref{subsec:token}.

\parBF{Setup}
In the smart card system, the global setup (\OnlyNameGlobalSetup) outputs public parameters for the Pedersen commitment scheme and a hash function $H$.
The registration station generates a key pair $(\sk, \pk)$. The card initiates a last-seen-period counter necessary for \reqlink{distribution} (see distribution phase below).
The algorithms are implemented as follows:

\begin{itemize}
    \item $\param \leftarrow \GlobalSetup{\secl}.$
      The global setup function runs $\parampc \gets \comgen(\secl)$ to obtain Pedersen commitment parameters and picks a cryptographic hash function $H: \{0, 1\}^* \to \{0, 1\}^{\ell}$. It returns returns $\param = (\parampc, H)$.

    \item $\sk, \pk \leftarrow \SetupRS{\secl}.$ 
    The registration station takes as input \secl and runs $(\sk, \pk) \gets \dsgen(\secl)$ to obtain a signing key pair.

    \item $\stateT \leftarrow \SetupT{\secl,\allowbreak \pk}.$
    The token takes as input the security parameter $\secl$ and the public key \pk. Let the last-seen-period counter $\lastepoch = 0$. The function returns the state $\stateT = (\lastepoch,\allowbreak \pk)$.
\end{itemize}

\parBF{Registration}
To register the first recipient of a household, the registration station collaborates with the registration oracle to check eligibility and determine the household's entitlement. 
(To register additional members for a household see the card-whispering section below.)
The smart card does not send an initial $\req$ message, so we omit the definition of \OnlyNamePrepare.
\begin{itemize}
    \item $\res,\allowbreak \RevokeId \leftarrow \Process{\sk,\allowbreak \ent,\allowbreak \req}.$
    On input of the private key $\sk$ and the entitlement \ent (\req is null), the registration station creates a random revocation value $\RevokeId \randin \bins^{2\ell}$. It returns the response $\res = (\sk,\allowbreak \ent,\allowbreak \RevokeId)$. 

    \item $\stateT,\allowbreak \ent,\allowbreak \RevokeId \leftarrow \Finish{\stateT,\allowbreak \res}.$
On input of its state $\stateT = (\lastepoch, \pk)$ and the response \res from the registration station, the card parses the response as $\res = (\sk,\allowbreak \ent,\allowbreak \RevokeId)$ and checks that the private key \sk corresponds to the public key \pk (as set during \OnlyNameSetupT). The card generates a random household secret $\kH \randin \bins^{\ell}$ and returns the updated state $\stateT = (\lastepoch,\allowbreak \pk,\allowbreak \sk,\allowbreak \kH,\allowbreak \ent,\allowbreak \RevokeId)$.
\end{itemize}

\parBF{Card-whispering}\label{app:mul}
Recall that all cards assigned to the same household should be able to request aid (\reqlink{robust}).
To prevent double-dipping, all these cards must produce the same double-dipping tags (see distribution below).
Thus, when registering members of an already registered household we clone a previous registered household member's card.

To preserve recipient privacy (\reqlink{registration}, \reqlink{distribution}), the system must limit when, and by whom this protocol can be run.
Otherwise, an attacker with a cloned card can simply reproduce double-dipping tags $\tagH$ and thus track a household's activities.
We recommend that cloning can only happen \emph{after} the card to be cloned successfully authenticates its owner (e.g., using biometrics, see Sect.~\ref{subsec:auth}) and \emph{before} distribution starts. See \cref{app-cardwhisper} for the full protocol.

\parBF{Distribution}
We now describe how smart cards request aid, and how the distribution station verifies this request.
Smart cards use the $\lastepoch$ variable in their state to protect against malicious distribution stations by ensuring that they do not answer requests for past epochs (\reqlink{distribution}).

\begin{itemize}
    \item $\stateT,\allowbreak (\ent,\allowbreak \tagH,\allowbreak \entproof) \leftarrow \ShowupT{\stateT,\allowbreak \epo,\allowbreak \BlockList}.$
    The card takes as input its own state \stateT, the period \epo, and the latest blocklist \BlockList provided by the distribution station.
    It parses $\stateT$ as $(\lastepoch,\allowbreak \pk,\allowbreak \sk,\allowbreak \kH,\allowbreak \ent,\allowbreak \RevokeId)$.
    The card checks that it has not been blocked and that the period $\epo$ is not in the past, i.e., $\RevokeId \not\in \BlockList \land \lastepoch < \epo$.
    If either of the checks fails, the card aborts. Otherwise, the card computes a tag \tagH, a commitment \com, and a signature \signatureAudit,
    \begin{align*}
        \tagH &= \prf{\kH}{\epo} \\
        \com &= \code{Commit}(\ent, r), \quad r \randin \Zq \\
        \signatureAudit &= \code{Sign}(\sk, \tagH \parallel \epo \parallel \com \parallel \hBL),
    \end{align*}
    where $\hBL = H(\BlockList)$.
    Let $\entproof = (\signatureAudit,\allowbreak \com,\allowbreak r)$.
    It returns a new state \stateT by updating $\lastepoch$ to $\epo$,
    and $(\ent,\allowbreak \tagH,\allowbreak \entproof)$.

    \item $\top/\bot \leftarrow \VerifyDS{\pk, \epo, \allowbreak (\ent,\allowbreak \tagH,\allowbreak \entproof),\allowbreak \BlockList}.$
      The distribution station parses the proof $\entproof = (\signatureAudit,\allowbreak \com,\allowbreak r)$ and checks the signature $\signatureAudit$ by checking that $\Verify{\pk,\allowbreak \signatureAudit,\allowbreak \tagH\parallel \epo \parallel\com \parallel \hBL} = \top$ where $\hBL = H(\BlockList)$. It also verifies the commitment by checking that $\com = \comcommit(\ent, r)$. If both checks pass, it returns $\top$, and $\bot$ otherwise.
\end{itemize}
The station additionally checks that it has not seen $\tagH$ before handing out the goods (\reqlink{limit}).

\parBF{Auditing}
The distribution station can use the recorded transactions
$(\ent, \allowbreak \tagH, \allowbreak \entproof)$ to create audit proofs for the auditor.
We assume all records are valid for $\epo$ and $\BlockList$.
\begin{itemize}
    \item $\entsum,\allowbreak, \AuditProof\leftarrow \CreateAuditProofDS{\epo,\allowbreak (\ent^{(i)},\allowbreak \tagH^{(i)},\allowbreak \entproof^{(i)})_i, \BlockList}$
    The distribution station takes as input all the entries $(\ent^{(i)},\allowbreak \tagH^{(i)},\allowbreak \entproof^{(i)})$ in a distribution period \epo and lets
    \begin{equation*}
    \entproof^{(i)} = (\signatureAudit^{(i)}, \epo, \com^{(i)}, r^{(i)}).
    \end{equation*}
    The distribution station computes the sum of entitlement and the sum of random values
    \[
        \entsum =  \sum_{i=1}^{n_{\epo}}\ent^{(i)},\quad \rsum = \sum_{i=1}^{n_{\epo}}r^{(i)}.
    \]
    The station returns \entsum together with an audit proof
    \[
        \AuditProof = ( \rsum, (\signatureAudit^{(i)}, \tagH^{(i)}, \com^{(i)})_i, \hBL).
    \]
    where $\hBL = H(\BlockList)$.

    \item $\top/\bot \leftarrow \VerifyA{\pk,\allowbreak \epo,\allowbreak \entsum,\allowbreak \AuditProof,\allowbreak \BlockList}.$
    The auditor takes as input the public key \pk, the period \epo, the total entitlement \entsum, the auditing proof \AuditProof, and the blocklist \BlockList. Let $\AuditProof = (\rsum, (\signatureAudit^{(i)}, \tagH^{(i)}, \com^{(i)})_i, \hBL)$. The auditor checks:
    \begin{itemize}
        \item the validity of all signatures in the proof \AuditProof
        \[
        \top = \Verify{\pk, \signatureAudit^{(i)}, \tagH^{(i)}\parallel \epo \parallel\com^{(i)} \parallel \hBL},
        \]
        \item the uniqueness of all tags $\tagH^{(i)}$
        \item the sum of entitlement and the sum of random numbers match the commitment 
        \[ \prod_{i=1}^{n_{\epo}} \com^{(i)} = \comcommit(\entsum, \rsum) \] 
        \item that $\BlockList$ corresponds to $\hBL$, i.e., $\hBL = H(\BlockList)$.
    \end{itemize}
    If all four checks pass, it outputs $\top$ and $\bot$ otherwise.
\end{itemize}

\section{Smartphone Solution}\label{sec:phone}

In some areas where humanitarian organizations operate, smartphones are available (at least one per household).
To take advantage of this fact, we propose a second instantiation of the design in Sect.~\ref{sec:sys} using smartphones as tokens.

\parBF{Existing devices as tokens} Using smartphones as tokens reduces deployment costs associated with token distribution.
And smartphones have much more computation and storage capacity than smart cards. 
However, unlike a smart card, smartphones can run arbitrary software. Thus, their output cannot be trusted. 
We address this by using advanced cryptography in three ways.

\parBF{Proving eligibility}
We use attribute-based credentials (ABCs) to enable the phone to prove the recipient's eligibility and entitlement to the distribution station. 
The use of ABCs lets us satisfy~\reqlink{legitimate} while, at the same time, ensuring privacy~\reqlink{distribution}

We opt for the pairing-based ABC scheme by Pointcheval and Sanders (PS)~\cite{PointchevalS16}.
We considered the use of other ABC schemes, but found them to be less suited for our purposes.
Some schemes can only be shown once while maintaining unlinkability, and would thus require issuing one credential per distribution round~\cite{BaldimtsiL13,BrandsRethinking00}; others require the issuer and the verifier to share keys, which would harm auditability in our setting~\cite{ChaseMZ14}. 
There are alternatives offering the same functionality as PS credentials with no less complexity~\cite{AuSMC13,CamenischL02}.

Other technologies in previous research, e.g., FIDO~\cite{FIDO} and PrivacyPass~\cite{DavidsonGSTV18}, are not suitable in our use case. To be more specific, FIDO does not provide unlinkability towards the same relying party (the distribution station), i.e., whenever the recipient shows up twice at the same distribution station, the actions are linkable.
PrivacyPass could be used instead of ABCs by encoding entitlement into PrivacyPass tokens that are provided to the distribution station. However, it is hard to see how privacy pass tokens can support privacy-friendly auditability (which requires proving how many tokens were redeemed); or revocation (which requires privacy-friendly blocklisting).

\parBF{Double-dipping tags}
Instead of using a PRF to compute a double-dipping tag, we use a trick from direct anonymous attestation~\cite{BrickellCC04}: we compute $\tagH = H(\epo)^{\kH}$ where $\kH$ is the household secret and $H: \bins^{*} \to \group$ is a cryptographic hash function mapping strings to group elements. We extend the zero-knowledge proof of having a credential to prove the correct computation of $\tagH$.

\parBF{Proving non-revocation}
Finally, we rely on cryptography to let smartphones prove that their credential has not been revoked. 
Since we aim to provide privacy during distribution against malicious registration and distribution stations, not all revocation mechanisms are appropriate (see, e.g., Lapon et al.~\cite{LaponKDN11} for an overview). 
Because the revoking party, i.e., the registration station, is not trusted, phones should be able to detect whether they have been revoked. 
This rules out approaches such as verifier local revocation~\cite{BonehS04} as well as any approach based on verifiable encryption.

Instead, we use anonymous blocklisting~\cite{TsangAKS10,HenryG13} to let the phone prove that it is not currently revoked. 
These proofs are linear in the size of the blocklist. 
We considered dynamic accumulators~\cite{LiLX07,Nguyen05,DamgardT08a} -- whose non-revocation proofs are constant size -- but decided against them due to their increased cryptographic complexity and the need to handle witness updates in a privacy-friendly manner.

\subsection{Preliminaries: Attribute-based Credentials}
We use ABCs as a building block, and describe them only at a high level. During the issuance protocol, between a user and an issuer, the issuer provides a signed credential to the user. 
A credential contains one or more attributes, some of which can be hidden from the issuer during the signing process. 
We use $C(\kH, \ent, \RevokeVal)$ to denote a credential with the three attributes $\kH$, $\ent$ and $\RevokeVal$. 
Users can create a zero-knowledge proof of possessing a valid credential. For simplicity, we slightly abuse the notation and write:
\begin{equation*}
  \nizk\left\{ (C, \kH, \ent, \RevokeVal) : C(\kH, \ent, \RevokeVal) \right\}
\end{equation*}
to denote the non-interactive zero-knowledge proof of knowing a valid credential $C$ with attributes $\kH, \ent, \RevokeVal$. 
We refer to Appendix~\ref{app:ps-abc} for more details about ABCs.

\subsection{Smartphone-based Aid Distribution Protocol}
We define the smartphone-based system by instantiating the scheme from Sect.~\ref{subsec:token}.

\parBF{Setup}
The \OnlyNameGlobalSetup function determines global parameters, including those for the ABC scheme. 
The registration station acts as the issuer in the ABC scheme, and thus, creates a key pair for it.
\begin{itemize}
    \item $\param \leftarrow \GlobalSetup{\secl}.$
    The \OnlyNameGlobalSetup algorithm takes as input the security parameter \secl. It computes three types of parameters.
    First, it generates for the ABC scheme a type-III bilinear group pair given by $\paramps = (e(\cdot,\allowbreak \cdot),\allowbreak \group_1,\allowbreak \group_2,\allowbreak \group_T,\allowbreak \grouporder,\allowbreak g_1,\allowbreak g_2,\allowbreak g_T)$ where $g_1, g_2, g_T$ are generators of the groups $\group_1, \group_2, \group_T$ of order $\grouporder$ respectively.
    Second, it sets up the Pedersen commitment scheme in $\group_1$ with parameters $\parampc = (\group_1,\allowbreak \grouporder,\allowbreak g_1,\allowbreak h)$.
    Third, it picks a hash function $H: \{0, 1\}^{*} \rightarrow \mathbb{G}_{1}$ that maps strings onto the group $\mathbb{G}_{1}$.
    The application provider publishes all the global parameters $\param = (\paramps,\allowbreak \parampc,\allowbreak H)$.
  \item $\sk, \pk \leftarrow \SetupRS{\secl}.$
    The registration station generates a signing key pair $(\sk, \pk)$ for the ABC scheme.
  \item $\stateT \leftarrow \SetupT{\secl, \pk}.$
    The phone takes as input \secl and \pk from the registration station. 
    It initiates the last-seen-period counter \lastepoch and phone state $\stateT = (\lastepoch, \pk)$. 
\end{itemize}

It is essential for privacy that all phones receive the same public key \pk. In the following, we assume this is the case.

\parBF{Registration}
Recipients take their smartphones when registering at the registration station. The station associates legitimate recipients to the household with the help from the registration oracle. The station runs the ABC issuance protocol jointly with the phone to issue a credential.
\begin{itemize}
\item $\req,\allowbreak \stateT \leftarrow \Prepare{\stateT}.$
    The phone takes as input the initial state \stateT which it parses as $(\lastepoch, \pkrs)$.
    It generates a household secret $\kH \randin \Zq$. To enable revocation using anonymous blocklisting, the phone picks $\RevokeVal \randin \Zq$ and computes the revocation value $\RevokeId = (\generator_1, \generator_1^{\RevokeVal})$. The phone creates the first message $\req_{abc}$ in the ABC issuance protocol where $\kH$ and $\RevokeVal$ are phone-defined attributes. We assume that $\req_{\text{abc}}$ includes a proof that $\RevokeId$ is well formed (see~\cref{app:abc:issue}). The attribute $\kH$ and $\RevokeVal$ will be hidden from the registration station (i.e., the issuer). Let $\stateIss$ be the phone's private issuance state, $\stateT = (\lastepoch, \pkrs, \kH, \RevokeVal, \stateIss)$ the updated token state. The phone returns $\req = (\req_{\text{abc}}, \RevokeId)$ and $\stateT$.
    \item $\res,\allowbreak \RevokeId \leftarrow \Process{\sk,\allowbreak \ent,\allowbreak \req}.$
      On input of its private key \sk, the entitlement $\ent$ and the issue request $\req = (\req_{\text{abc}}, \RevokeId)$ from the phone, the registration station checks the validity of the request and that $\RevokeId$ has been correctly formed.
      Then it uses \sk and the phone's request $\req_{\text{abc}}$ to create a preliminary credential on the attributes $(\kH, \ent, \RevokeVal)$ (without learning either $\kH$ or $\RevokeVal$) which it packs into $\res$.
      It returns $\res$ and $\RevokeId$.
    \item $\stateT,\allowbreak \ent,\allowbreak \RevokeId \leftarrow \Finish{\stateT,\allowbreak \res}.$ On input of its internal state $\stateT = (\lastepoch, \pk, \kH, \RevokeVal, \stateIss)$ and the registration station's response $\res$, the phone completes the credential issuance protocol to obtain a credential $C$ on the attributes $(\kH, \ent, \RevokeVal)$ using its issuance state $\stateIss$ and $\res$. Then, it verifies that $C$ is a valid credential under $\pkrs$, and aborts otherwise. The phone returns the new state $\stateT = (\lastepoch, \pk, C, \kH, \ent, \RevokeVal)$, $\ent$, and $\RevokeId = (\generator_1, \generator_1^{\RevokeVal})$.
\end{itemize}

\parBF{Distribution} We next describe the algorithms used during aid collection.

\begin{itemize}
    \item $\stateT,\allowbreak (\ent,\allowbreak \tagH,\allowbreak \entproof) \leftarrow \ShowupT{\stateT,\allowbreak \epo,\allowbreak \BlockList}$.
    The phone takes as input \stateT, the period \epo, and the blocklist \BlockList.
    Let $\stateT =  (\lastepoch, \pk, C, \kH, \ent, \RevokeVal)$.
    The phone verifies that $\epo$ is not in the past, and that it has not made a distribution request in this epoch before, i.e., it checks that $\lastepoch < \epo$.
    Then, the phone checks that it is not blocked, i.e., that for all $(h, H) \in \BlockList$, $H \not= h^{\RevokeVal}$.
    If any check fails, the phone aborts. To continue, the phone computes a tag \tagH and a commitment \com:
    \begin{align*}
        \tagH &= H(\epo)^{\kH},  \\
        \com  &= \code{Commit}(\parampc, \ent, r), r \randin \Zq.
    \end{align*}
    Finally, the phone constructs the proof
    \begin{multline*}
      \pi_s = \nizk\big\{ (C, \kH, \ent, \RevokeVal, r) : \tagH = H(\epo)^{\kH} \land\\
       C(\kH, \ent, \RevokeVal) \quad\land\quad \com = \code{Commit}(\ent, r) \\
            \land \quad \forall (h,H) \in \BlockList: H \not= h^{\RevokeVal} \big\},
    \end{multline*}
    to show that it knows a valid credential $C$ on $\kH, \ent, \RevokeVal$, that $\tagH$ is correctly constructed, that $\com$ is correctly computed, and that it is not revoked (using the blocklisting protocol~\cite{HenryG13}). Let $\stateT = (\epo, \pkrs, C, \kH, \ent, \RevokeVal)$ be the new state and $\entproof = (\pi_s, \com, r)$.  It returns $\stateT$ and $(\ent, \tagH, \entproof)$.

    \item $\top/\bot \leftarrow \VerifyDS{\pk, \epo, (\ent,\allowbreak \tagH,\allowbreak \entproof),\allowbreak \BlockList}$. On input of the public key $\pkrs$, the period $\epo$, the tuple $(\ent, \tagH, \entproof)$, and the blocklist \BlockList, the distribution station proceeds as follows. First, it parses $\entproof$ as $(\pi_s, \com, r)$. Then, it verifies the proof $\pi_s$ using the public key $\pkrs$ and the provided blocklist \BlockList. If the proof verifies, it returns $\top$, and $\bot$ otherwise.
\end{itemize}

\parBF{Auditing}
Auditing proceeds similarly to the smart card system, except that the auditor checks zero-knowledge proofs. Again, we assume the records are valid for $\epo$ and \BlockList.
\begin{itemize}
  \item $\entsum,\allowbreak \AuditProof\leftarrow \CreateAuditProofDS{\epo,\allowbreak (\ent^{(i)},\allowbreak \tagH^{(i)},\allowbreak \entproof^{(i)})_{i}, \BlockList}.$
  The distribution station takes as input all the entries $(\ent^{(i)},\allowbreak \tagH^{(i)},\allowbreak \entproof^{(i)})$ in a distribution period \epo and lets
  \begin{equation*}
    \entproof^{(i)} = (\pi_s^{(i)}, \com^{(i)}, r^{(i)}).
  \end{equation*}
  The distribution station computes the sum of entitlement
  $\entsum =  \sum_{i=1}^{n_{\epo}}\ent^{(i)}$ and the randomizer $\rsum = \sum_{i=1}^{n_{\epo}}r^{(i)}$. It returns \entsum and an audit proof
  \begin{equation*}
  \AuditProof = ( \rsum, (\pi_s^{(i)}, \tagH^{(i)}, \com^{(i)})_i).
  \end{equation*}

  \item $\top/\bot \leftarrow \VerifyA{\pk,\allowbreak \epo,\allowbreak \entsum,\allowbreak \AuditProof,\allowbreak \BlockList}.$
  The auditor takes as input the period \epo, the total entitlement \entsum, the auditing proof \AuditProof, and the blocklist \BlockList that was used in \epo.
  Let $\AuditProof = ( \rsum, (\pi_s^{(i)}, \tagH^{(i)}, \com^{(i)})_i)$.
  The auditor checks the validity of all the proofs $\pi_s^{(i)}$ with respect to $\tagH^{(i)}$, $\com^{(i)}$ and the blocklist $\BlockList$; checks the uniqueness of all tags $\tagH^{(i)}$; and checks that the sum of entitlement and the sum of random numbers match the commitment: $\prod_{i} \com^{(i)} = \comcommit(\entsum, \rsum)$. If all checks pass, it outputs $\top$, and $\bot$ otherwise.
\end{itemize}
 
\begin{algorithm}[tbp]
    \caption{Registration Oracles}\label{alg:reg-oracles}
    \begin{algorithmic}[1]
      \small
      \Function{$\mathcal{O}_{\code{HonestReg}}$}{$\id, \RevokeId, \ent$}
      \State \textbf{if} $\id \in \mathcal{H}\cup\mathcal{M}$ \textbf{then} return $\bot$  
      \State $\stateTid \leftarrow \SetupT{\secl, \pk}$
        \State $\req, \stateTid \leftarrow \Prepare{\stateTid}$
        \State $\res,\allowbreak \RevokeId \leftarrow \Process{\sk,\allowbreak \ent,\allowbreak \req}$ 
        \State $\stateTid, \ent, \RevokeId \leftarrow \Finish{\stateTid, \res}$
        \State $\mathcal{H} \leftarrow \mathcal{H} \cup \{\id\}$, $\Ent[\id] \leftarrow \ent$, $\Rev[\id] \gets \RevokeId$
        \State return $\RevokeId$
      \EndFunction
      \vspace{1mm}

      \Function{$\mathcal{O}_{\code{MalUserReg}}$}{$\id, \ent, \req$}
      \State \textbf{if} $\id \in \mathcal{H}\cup\mathcal{M}$ \textbf{then} return $\bot$  
      \State $\res,\allowbreak \RevokeId \leftarrow \Process{\sk,\allowbreak \ent,\allowbreak \req}$ 
        \State $\mathcal{M} \leftarrow \mathcal{M} \cup \{\id\}$, $\Ent[\id] \leftarrow \ent$, $\Rev[\id] \gets \RevokeId$
        \State return $\res$
      \EndFunction
      \vspace{1mm}

      \Function{$\mathcal{O}_{\code{PrepareReg}}$}{$\id$}
      \State \textbf{if} $\id \in \mathcal{H}_{pre}\cup\mathcal{H}$ \textbf{then} return $\bot$  
      \State $\stateTid \leftarrow \SetupT{\secl, \pk}$
        \State $\req, \stateTid \leftarrow \Prepare{\stateTid}$
        \State $\mathcal{H}_{pre} \leftarrow \mathcal{H}_{pre} \cup \{\id\}$
        \State return $\req$
      \EndFunction
      \vspace{1mm}

      \Function{$\mathcal{O}_{\code{FinishReg}}$}{$\id, \res$}
        \State if $\id \notin \mathcal{H}_{pre} \vee \id \in \mathcal{H}$ return $\bot$
        \State $\stateTid, \ent, \RevokeId \leftarrow \Finish{\stateTid, \res}$
        \State $\mathcal{H} \leftarrow \mathcal{H} \cup \{\id\}$, $\Ent[\id] \leftarrow \ent$, $\Rev[\id] \gets \RevokeId$
      \EndFunction
      \vspace{1mm}
    \end{algorithmic}
\end{algorithm}

\section{Properties and Proofs}\label{sec:pro}
We formalize the security and privacy properties needed to fulfill the requirements described in Sect.~\ref{sec:requirements} using cryptographic games. In the games, the adversary interacts with users and parties in the system using oracles. See \cref{alg:reg-oracles,alg:dist-oracles} for an overview of the oracles we use.
Oracles track information, e.g., the entitlement and revocation values assigned to recipients (in $\Ent, \Rev$), the honest and malicious recipients (in $\mathcal{H}, \mathcal{M}$), as well as the last epoch in which an honest recipient sent information to a distribution station (using $\lastepoch$). 
We use these to define win conditions.

\begin{algorithm}[tbp]
  \caption{Distribution Oracles}\label{alg:dist-oracles}
  \begin{algorithmic}[1]
    \small
      \Function{$\mathcal{O}_{\code{Showup}}$}{$\id, \epo, \BlockList$}
        \State \textbf{if} $\id \notin \mathcal{H}$ \textbf{return} $\bot$ 
        \State \textbf{if} $\Rev[\id] \not\in \BlockList$ \textbf{then} $\entsum[\epo,\BlockList] \gets \entsum[\epo,\BlockList] + \Ent[\id]$
        \State $\EpoDic[\id] \leftarrow \max(\EpoDic[\id], \epo)$ \label{alg:doracle:track-epoch}
        \State $\stateTid,\allowbreak (\ent,\allowbreak \tagH,\allowbreak \entproof) \leftarrow \ShowupT{\stateTid,\allowbreak \epo,\allowbreak \BlockList}$
\State return \ent, \tagH, \entproof
      \EndFunction
      \vspace{1mm}

      \Function{$\mathcal{O}_{\code{ShowupTwo}}$}{\idzero, \idone, \epo, \BlockList}
        \State if $\idzero, \idone \notin \mathcal{H}$ return $\bot$
        \State $\BlockLists[\epo] \gets \BlockLists[\epo] \cup \{ \BlockList \}$
        \State $\stateT^{(\idzero)},\allowbreak (\ent^{(0)},\allowbreak \tagH^{(0)},\allowbreak \entproof^{(0)}) \leftarrow \ShowupT{\stateT^{(\idzero)},\allowbreak \epo,\allowbreak \BlockList}$ 
        \If {$\VerifyDS{\pk, \epo, (\ent^{(0)},\allowbreak \tagH^{(0)},\allowbreak \entproof^{(0)}),\allowbreak \BlockList} \land \tagH^{(0)} \not\in \transcript^0$}
        \State $\transcript^0[\epo] \leftarrow \transcript^0[\epo] \cup \{(\ent^{(0)}, \tagH^{(0)}, \entproof^{(0)})\}$
        \EndIf
        \State $\stateT^{(\idone)},\allowbreak (\ent^{(1)},\allowbreak \tagH^{(1)},\allowbreak \entproof^{(1)}) \leftarrow \ShowupT{\stateT^{(\idone)},\allowbreak \epo,\allowbreak \BlockList}$ 
        \If {$\VerifyDS{\pk, \epo, (\ent^{(1)},\allowbreak \tagH^{(1)},\allowbreak \entproof^{(1)}),\allowbreak \BlockList} \land \tagH^{(1)} \not\in \transcript^{1}$}
        \State $\transcript^1[\epo] \leftarrow \transcript^1[\epo] \cup \{(\ent^{(1)}, \tagH^{(1)}, \entproof^{(1)})\}$
        \EndIf
      \EndFunction
  \end{algorithmic}
\end{algorithm}

\subsection{Privacy of Distribution}
An aid-distribution system should preserve the privacy of recipients at distribution against (potentially malicious) registration and distribution stations (\reqlink{registration}, \reqlink{distribution}). We model this using the indistinguishability experiment in Algorithm~\ref{alg:ind}.
Adversary $\adv$ plays the role of a malicious registration station and produces the corresponding public key $\pkrs$ (line~\ref{alg:ind:output-pk}). The adversary can use the oracles $\OraclePrepareReg$ and $\OracleFinishReg$ to interact as a (malicious) registration station with honest users (line~\ref{alg:ind:query-phase}). It can also act as a (malicious) distribution station and use $\OracleShowUp$ to request that an honest user runs $\OnlyNameShowupT$. The oracle keeps track of the last epoch for which $\OracleShowUp$ has been called for each user (line~\ref{alg:doracle:track-epoch}, Algorithm~\ref{alg:dist-oracles}).
The modeling assumes that all tokens receive the same public key during \OnlyNameSetupT. In a real deployment, this assumption must be realized, e.g., by a trusted manufacturer (for cards) or application provider (for phones).

\begin{algorithm}[tbp]
\caption{Indistinguishability experiment}\label{alg:ind}
\begin{algorithmic}[1]
  \small
    \Function{$\expind$}{$\secl$}
      \State $\param \leftarrow \GlobalSetup{\secl}$
      \State $\pkrs \leftarrow \adv(\param)$ \label{alg:ind:output-pk}
      \State $\idzero, \idone, \epo^{*}, \BlockList^{*} \leftarrow \adv^{\mathcal{O}_{\code{PrepareReg}}(\cdot),\allowbreak \mathcal{O}_{\code{FinishReg}}(\cdot),\allowbreak \mathcal{O}_{\code{Showup}}(\cdot)}()$ \label{alg:ind:query-phase}
      \State $\epsilon^{l}_{0} = \EpoDic[\idzero]$, $\epsilon^{l}_{1} = \EpoDic[\idone]$
      \State \textbf{if} $\idzero, \idone \notin \mathcal{H}$ \textbf{then} return $\bot$ \label{alg:ind:both-honest}
      \If {
           $\quad\Ent[\idzero] \neq \Ent[\idone]$ \textbf{or} \\
           $\phantom{aaaa}|\{\Rev[\idzero], \Rev[\idone]\} \cap \BlockList^{*}| = 1$ \textbf{or} \\
           $\phantom{aaaa}(\epo^{*} > \min(\epsilon^{l}_{0}, \epsilon^{l}_{1})\allowbreak \land\allowbreak  \epo^{*} \leq \max(\epsilon^{l}_{0}, \epsilon^{l}_{1}))$} \label{alg:ind:no-trivial-wins}

          \State return $\bot$
      \EndIf
      \State $\ent^{b}, \tagH^{b}, \entproof^{b} \leftarrow \mathcal{O}_{\code{Showup}}(\idb, \epo^{*}, \BlockList^{*})$ \label{alg:ind:challenge-response}
      \State $b' \leftarrow \adv(\ent^{b}, \tagH^{b}, \entproof^{b})$
      \State return $b' = b$ \label{alg:ind:bit}
    \EndFunction
\end{algorithmic}
\end{algorithm}

Eventually the adversary outputs two honest challenge users $\idzero, \idone$, a challenge epoch $\epo^*$ and a challenge blocklist $\BlockList^{*}$ (line~\ref{alg:ind:query-phase}). The adversary's goal is to recognize one of these users during distribution. We rule out 3 trivial win conditions (line~\ref{alg:ind:no-trivial-wins}): (1) the selected recipients have different entitlements, (2) exactly one of the recipients is revoked, (3) exactly one of the tokens will run $\OnlyNameShowupT$ in the selected period $\epo^*$. If the adversary can determine the challenge bit $b$ (line~\ref{alg:ind:bit}), it wins the game. 

\begin{definition}
An aid-distribution system provides indistinguishability if the following advantage is negligible:
\[
    \code{Adv}_{\adv}^{\text{IND}} = \left|\prob\left[\expind[0](\secl)\right] - \prob\left[\expind[1](\secl)\right]\right|.
\]
\end{definition}

\begin{theorem}
  \label{thm:smart-card-indistinguishability}
The smart card system has indistinguishability provided that \PRFtwo\xspace is a pseudo-random function.
\end{theorem}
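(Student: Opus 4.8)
The plan is to show that the adversary's entire view is independent of the challenge bit $b$ up to a negligible PRF term, so that $\code{Exp}^{\text{IND}}_{\adv,0}$ and $\code{Exp}^{\text{IND}}_{\adv,1}$ are indistinguishable. Observe first that everything the adversary sees before the challenge is produced by the oracles $\OraclePrepareReg$, $\OracleFinishReg$ and $\OracleShowUp$, none of which depends on $b$; the bit enters only through the single challenge transcript $(\ent^{b}, \tagH^{b}, \entproof^{b})$ returned by the challenge $\OracleShowUp$ call. Hence it suffices to prove that this transcript is (computationally) identically distributed for $b=0$ and $b=1$, conditioned on any fixed prior view.

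I would then decompose $\entproof^{b} = (\signatureAudit^{b}, \epo^{*}, \com^{b}, r^{b})$ and argue component by component, using the win conditions of Algorithm~\ref{alg:ind}. The entitlement satisfies $\Ent[\idzero] = \Ent[\idone]$ (otherwise the game aborts), so $\ent^{b}$ is $b$-independent. The two revocation values have the same membership in $\BlockList^{*}$, so the card's non-revocation check has the same outcome in both worlds; together with the freshness of $\epo^{*}$ this guarantees that the challenge card does not abort and returns a well-formed (non-$\bot$) transcript regardless of $b$. The commitment $\com^{b} = \code{Commit}(\ent, r^{b})$ uses a fresh uniform $r^{b}$, so the pair $(\com^{b}, r^{b})$ has the same distribution for both candidates and reveals nothing about $b$; the values $\epo^{*}$ and $\hBL = H(\BlockList^{*})$ are fixed. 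Finally $\signatureAudit^{b} = \code{Sign}(\sk, \tagH^{b}\parallel \epo^{*}\parallel \com^{b} \parallel \hBL)$ is a function of the signed message and the registration key $\sk$, which the adversary itself supplied during registration and can therefore recompute, so it carries no information beyond the message. This isolates the household tag $\tagH^{b} = \PRFtwo_{k_b}(\epo^{*})$, where $k_0, k_1$ are the household secrets $\kH$ sampled inside the two challenge cards, as the only $b$-dependent quantity.

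The crux is then to show that $\PRFtwo_{k_0}(\epo^{*})$ and $\PRFtwo_{k_1}(\epo^{*})$ are indistinguishable given the rest of the view. Since each $k_b$ is drawn inside an honest card and never leaves it, I would reduce to PRF security through a short hybrid argument that replaces $\PRFtwo_{k_0}$ and $\PRFtwo_{k_1}$, one at a time, by independent truly random functions $f_0, f_1$. Each replacement is justified by a reduction that answers \emph{all} tags of the corresponding household -- every tag produced by $\OracleShowUp$ for that identity, together with the challenge tag -- through its PRF oracle, so a distinguisher between consecutive hybrids immediately yields a PRF distinguisher. Once both PRFs are random, the challenge tag equals $f_b(\epo^{*})$; because the win conditions make $\epo^{*}$ a \emph{fresh} epoch for the challenge household (a card only answers strictly increasing epochs via its $\lastepoch$ counter, so it has never evaluated its tag at $\epo^{*}$ before), $f_b(\epo^{*})$ is a uniform value independent of the rest of the view. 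In this final hybrid the challenge transcript is identically distributed for $b=0$ and $b=1$, and the adversary's guess is independent of $b$. Summing the (constant number of) hybrid gaps gives $\code{Adv}^{\text{IND}}_{\adv} \le 4\,\varepsilon_{\PRFtwo}$, which is negligible.

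The main obstacle is exactly this freshness step: the reduction only makes the challenge tag look uniform if $\PRFtwo_{k_b}$ has not already been evaluated at $\epo^{*}$ through an earlier $\OracleShowUp$ query on the same household, for otherwise the challenge tag would be correlated with a previously revealed tag and would \emph{not} be hidden by the random-function replacement. Establishing this -- by combining the epoch win condition with the monotonicity of the $\lastepoch$ counter, which forces each card to answer only strictly increasing epochs -- and simultaneously verifying that the abort behaviour induced by $\lastepoch$ and by $\BlockList^{*}$ is identical for both candidates (so the challenge is non-$\bot$ in both worlds) is the delicate part of the bookkeeping. A secondary subtlety is justifying that $\signatureAudit^{b}$ leaks nothing despite being produced under an adversarially chosen key: this relies on both challenge cards holding the same registration key $\sk$, as in the real system, so that the signature is an identically distributed function of an identically distributed message.
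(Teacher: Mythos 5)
Your proof is correct and takes essentially the same approach as the paper's proof sketch: restrict to the case where both challenge cards produce real output, use the equal-entitlement win condition and the fresh commitment randomness to discard $\ent$ and $\com$, observe that the signature is uninformative because both cards hold the same signing key $\sk$, and reduce the remaining distinguishing power on $\tagH$ to PRF security using the freshness of $\epo^{*}$. Your write-up simply makes explicit the hybrid argument and the $\lastepoch$/blocklist bookkeeping that the paper's sketch leaves implicit.
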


\begin{proof}[Proof of \cref{thm:smart-card-indistinguishability}]\label{proof:ind-card}
  For this, we consider the indistinguishability experiment $\expind$ in \Cref{alg:ind}. As a result of the check in line~\ref{alg:ind:both-honest}, we know that both users are controlled by the challenger. If the adversary triggers the conditions in line~\ref{alg:ind:no-trivial-wins}, it receives $\bot$, and hence, cannot win the game. We focus on the case where $\Ent[\idzero] = \Ent[\idone]$. Also, notice that if $\epo^{*} \leq \min(\epsilon^{l}_{0}, \epsilon^{l}_{1})$ (challenge epoch is before the last epoch of both cards) or $\{\Rev[\idzero], \Rev[\idone]\} \subset \BlockList^{*}$ (both cards blocked), then both cards $\idzero$ and $\idone$ will abort, and again the adversary cannot win the game. Therefore, going forward, we focus on the interesting case where $\epo^{*} > \max(\epsilon^{l}_{0}, \epsilon^{l}_{1})$ and $\Rev[\idzero], \Rev[\idone] \not\in \BlockList^{*}$. In this interesting case, both cards produce an output.

We use game-hopping to prove that the adversary has negligible advantage in winning the indistinguishability game. We start with the adversary running the indistinguishability experiment with $b$, i.e., $\expind$, and after a sequence of transitions, end up in the situation where the output is independent of the bit $b$. Our argument is that each of these steps is indistinguishable, and therefore the results follows.

The proof proceeds along the following sequence of games: 

\game{0} Let $G_0$ be the $\expind$ game. In this game, the adversary receives $\ent^{b}, \tagH^{b}, \entproof^{b}$ (see line~\ref{alg:ind:challenge-response}), where:
\begin{equation*}
  \tagH^b = \prf{\kHi{\id_b}}{\epox}
\end{equation*}

\game{1} Let $G_{1}$ be as $G_0$ but for all households $\id$, we replace $\PRFtwo(\kHi{\id}, \cdot)$ (used for generating the tags $\tagH$ in \ShowupT) with a uniformly random function $f_{\id}(\cdot)$ with matching domain and range.
Notice that this change affects oracle calls to $\OracleShowUp$ during the query phase (line~\ref{alg:ind:query-phase}) as well as how the challenge is computed (line~\ref{alg:ind:challenge-response}).

We argue that the change between $G_0$ and $G_{1}$ is indistinguishable for the adversary. Assume the adversary queries at most $n$ different households, with identifiers $\id_1, \ldots, \id_n$. We use an hybrid argument. Let $G_0^{i}$ be the game in which the PRFs of the first $i$ households $\id_1, \ldots, \id_i$ have been replaced by truly random functions. Then, $G_0^{0} = G_0$ and $G_0^{n} = G_1$.

To prove the indistinguishability of each step, we use the PRF assumption. We proceed by contradiction, and assume that there exists an adversary \adv that can distinguish $G_0^{i-1}$ from $G_0^{i}$ for some $i \in \{1, \ldots, n \}$. We then construct another adversary \advb that can break the PRF assumption.

Distinguisher \advb simulates the indistinguishability game for \adv. In
particular, \advb controls and simulates honest users. At the start of
the game \advb samples random functions $f_{\id_j}$ for all $1 \leq j < i$, as
well as random PRF keys $\kHi{\id_j}$ for all $i < j \leq n$. It answers oracle
calls to \OraclePrepareReg and \OracleFinishReg (for the honest users it
controls) normally.

We now show how \advb computes $\ShowupT{\stateT, \epo, \BlockList}$ for a household with identifier $\id_j$. Distinguisher $\advb$ only modifies how $\tagH$ is computed, and otherwise proceeds as normal. For $j < i$, \advb sets $\tagH = f_{\id_j}(\epo)$ and for $j > i$, \advb follows the original protocol and sets $\tagH = \prf{\kHi{\id_j}}{\epo}$. For the situation where $i = j$, \advb invokes the challenger in the PRF game on \epo to compute $\tagH$, i.e., if $b = 0$ in the PRF game, \advb obtains $\tagH = \prf{\kHi{\id_j}}{\epo}$; if $b = 1$, \advb obtains $\tagH^{j} = f^*(\epo)$ for a random function $f^*$. Given $\tagH$, \advb proceeds as normal to compute the commitment $\com$, the signature $\signatureAudit$, and the proof $\entproof$. At the end, \advb outputs whatever \adv outputs. Note that \advb perfectly simulates $G_0^{i-1}$ when $b = 0$, and perfectly simulates $G_0^{i}$ when $b = 1$. Therefore, because no adversary can break the PRF game, no adversary \adv can distinguish $G_0^{i-1}$ from $G_0^{i}$ with non-negligible probability.

To conclude the proof, we argue that the advantage of adversary \adv in game $G_1$ is negligible. In response to the challenge query it receives $(\ent^b, \tagH^b, \entproof^b)$ where
\begin{align*}
  \tagH^b &= f_{\id_b}(\epox) \\
  \com^b &= \code{Commit}(\ent^b, r), \quad r \randin \Zq \\
  \signatureAudit^b &= \code{Sign}(\sk, \tagH^b \parallel \epox \parallel \com^b \parallel \hBL), \\
  \entproof^b &= (\signatureAudit^b, \epox, \com^b, r),
\end{align*}
and $\hBL = H(\BlockList^*)$. Notice that $f_{\id_b}$ is a random function (for both $b=0$ and $b=1$) and that by construction, neither $f_{\idzero}$ nor $f_{\idone}$ was ever queried on $\epox$. Therefore, $\tagH^b$ is independent of $b$. Similarly, $\ent^b$ is the same for $b=0$ and $b=1$ and all households use the same signing key $\sk$. Hence, the response $(\ent^b, \tagH^b, \entproof^b)$ is independent of $b$. Thus, we conclude that the advantage of \adv is in winning $G_1$ is indeed negligible. The result in the theorem follows.
\end{proof}

\begin{theorem}
  \label{thm:smartphone-ind}
  The smartphone system has indistinguishability in the random oracle model provided that the DDH assumption holds, the ABC scheme is unlinkable, and the blocklisting scheme is anonymous.
\end{theorem}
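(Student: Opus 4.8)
The plan is to prove indistinguishability through a sequence of game hops that transform $\code{Exp}^{\text{IND}}_{\adv,0}$ into $\code{Exp}^{\text{IND}}_{\adv,1}$, bounding each hop by one of the three assumed properties. The challenge response is $(\ent, \tagH, \entproof)$ with $\entproof = (\pi_s, \com, r)$. By the non-triviality conditions of Algorithm~\ref{alg:ind}, the two challenge users share the same entitlement $\ent = \Ent[\idzero] = \Ent[\idone]$, so this component already carries no information; moreover $\com = \comcommit(\ent, r)$ with freshly sampled $r$ is a fresh commitment and opening to the same value, hence identically distributed for both users. It therefore suffices to make the remaining two components, the tag $\tagH = H(\epo^*)^{\kH^{(b)}}$ and the proof $\pi_s$, independent of $b$. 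As in the smart-card proof, I focus on the case where both challenge tokens produce a real output: if both revocation values lie in $\BlockList^*$ both tokens abort and the response is $\bot$ regardless of $b$, while the condition on $\epo^*$ together with the ``not exactly one revoked'' condition guarantees the symmetric case in which both tokens are unrevoked and respond at the fresh epoch $\epo^*$.

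First I would remove the dependence of $\pi_s$ on the challenge user's credential and secret attributes. Replacing the honestly generated proof by one produced by the $\nizk$ simulator on the public statement $(\tagH, \com, \epo^*, \BlockList^*)$ is indistinguishable by the zero-knowledge / indistinguishability property of the PS-based ABC show, which hides which credential $C(\kH, \ent, \RevokeVal)$ underlies the proof. The non-revocation clause $\forall (h,H) \in \BlockList^*: H \neq h^{\RevokeVal}$ is quantified over the adversarially chosen blocklist, so simulating this sub-proof without $\RevokeVal$ relies specifically on the anonymity of the blocklisting scheme; I invoke it here to argue that the simulated non-revocation proof for $\idzero$ is indistinguishable from that for $\idone$. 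After this hop $\pi_s$ depends only on public data, $\tagH$, and $\com$, and no longer on $b$ beyond the value of $\tagH$.

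The core hop eliminates the dependence of $\tagH$ on $b$ using DDH in the random oracle model. Programming $H$ so that each distinct epoch queried maps to $g_1^{a}$ for a fresh uniform $a \in \Zq$, each tag becomes $g_1^{a\kH}$, and the map $\epo \mapsto H(\epo)^{\kH}$ is a pseudorandom function under DDH: a distinguisher between the family of tags $\{(g_1^{a_j}, g_1^{a_j\kH})\}_j$ and uniform pairs yields a DDH distinguisher, where the many correlated tags a single secret produces across epochs are derived from one DDH instance by random self-reducibility. Applying this to $\kH^{(0)}$ and to $\kH^{(1)}$ in turn lets me replace the whole tag-generating function of each challenge user by an independent random function; in particular the challenge tag $\tagH$ at the fresh epoch $\epo^*$ becomes a uniform group element independent of $b$. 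At this point the entire challenge response is independent of $b$, the two experiments are identically distributed, and $\code{Adv}_{\adv}^{\text{IND}}$ is bounded by the sum of the (negligible) ABC-indistinguishability, blocklisting-anonymity, and DDH advantages.

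I expect the main obstacle to be the DDH hop: one must run the reduction while the random oracle is simultaneously being queried by $\adv$ on arbitrary strings and while the challenge users must still answer $\OracleShowUp$ at other epochs consistently, so the programming of $H$ and the random self-reduction that converts many correlated tag samples into a single DDH challenge have to be set up with care. A secondary subtlety is making the proof simulation and the blocklisting-anonymity argument compose soundly for a blocklist $\BlockList^*$ that the adversary chooses after seeing the public key, which is why the non-revocation component is handled by the blocklisting anonymity assumption rather than folded silently into the ABC zero-knowledge.
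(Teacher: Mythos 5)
Your proposal is correct and follows essentially the same route as the paper's proof sketch: restrict to the case where both challenge tokens respond, observe that $\ent$ and $\com$ carry no distinguishing information, simulate $\pi_s$ via the ABC scheme's indistinguishability together with the anonymity of the blocklisting scheme, and reduce the unlinkability of $\tagH = H(\epo^*)^{\kH}$ to DDH in the random oracle model. Your elaboration of the tag argument (programming $H$ and using random self-reducibility to handle many epochs) simply fills in details the paper leaves implicit.
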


\begin{proof}[Proof of \cref{thm:smartphone-ind}]
We proceed as in the proof of \cref{thm:smart-card-indistinguishability}. Since the logic of \OnlyNameShowupT is the same for the smart card and smartphone implementation, we know that we only need to consider cases where neither $\idzero$ nor $\idone$ is blacklisted, that the challenge epoch $\epox$ is fresh, and that the entitlements of $\idzero$ and $\idone$ are the same, i.e., $\Ent[\idzero] = \Ent[\idone]$.

We use game hopping to prove that the adversary has negligible advantage in winning the indistinguishability game. We start with the adversary running the indistinguishability experiment $\expind$ for the smartphone solution, and after a sequence of transitions, end up in the situation where the adversary's view is independent of the bit $b$. We argue that each of these steps is indistinguishable, and therefore the result follows.

The proof proceeds along the following sequence of games:

\game{0} Let $G_0$ be the $\expind$ game. In this game, the adversary receives $\ent^b, \tagH^b, \entproof^b$ (see line~\ref{alg:ind:challenge-response}) where
\begin{align*}
  \tagH^b &= H(\epo)^{\kHi{\id_b}},  \\
  \com  &= \code{Commit}(\ent^b, r), r \randin \Zq,
\end{align*}
the proof
\begin{multline*}
  \pi_s = \nizk\big\{ (C, \kH, \ent^b, \RevokeVal, r) : \tagH^b = H(\epo)^{\kH} \land\\
  C(\kH, \ent, \RevokeVal) \quad\land\quad \com = \code{Commit}(\ent, r) \\
  \land \quad \forall (h,H) \in \BlockList: H \not= h^{\RevokeVal} \big\},
\end{multline*}
and $\entproof^b = (\pi_s, \com, r)$.

\game{1} Let $G_1$ be as $G_0$, but instead we simulate the zero-knowledge proof $\pi_s$. The simulator takes as input the statement, which include the public data $\tagH, \com$ and the blocklist $\BlockList$, and it outputs a simulated proof. In our implementation, the proof is a non-interactive zero-knowledge proof obtained by applying the Fiat-Shamir heuristic to the Sigma protocol. These protocols can be simulated, assuming the random oracle for the hash function used in the Fiat-Shamir heuristic. As a result, the adversary cannot distinguish games $G_0$ and $G_1$.

\game{2} Let $G_2$ be as $G_1$, but in calls to \OnlyNamePrepare (i.e., the first step of the credential issuance process), the phone sets $\kH = 0$ so that the credential issuance process is independent of the household key. In later calls to \OnlyNameShowupT the challenger uses the real household key $\kH$. The adversary \adv, i.e., the registration station, cannot detect this change during the issuance process as the household key attribute $\kH$ is blinded and as per the changes in game $G_1$ the zero-knowledge proof is already simulated.

\game{3} Let $G_3$ be as $G_2$ but in the query phase (line~\ref{alg:ind:query-phase})
we no longer compute the tag $\tagH$ for household $\id$ and epoch $\epo$ as $\tagH = H(\epo)^{\kHi{\id}}$, but instead set $\tagH = \generator^{z_{\epo}^{\id}}$, where $z_{\epo}^{\id} \randin \Zq$. We do not yet modify how the challenge response (line~\ref{alg:ind:challenge-response}) is computed.

We argue that the change between $G_2$ and $G_{3}$ is indistinguishable for the adversary. Assume the adversary makes at most $n$ queries to the \OracleShowUp oracle.
We use a hybrid argument. Let $G_2^{i}$ be the game in which the tag in the first $i$ queries $\OracleShowUp(\id, \epo, \BlockList)$ has been replaced by $\tagH = \generator^{z_{\epo}^{\id}}$, whereas in the remaining queries the tag is still computed as $\tagH = H(\epo)^{\kHi{\id}}$. Then, $G_2^{0} = G_2$ and $G_2^{n} = G_3$.

We prove indistinguishability of each step. Assume by contradiction that there exists an adversary \adv that can distinguish $G_2^{i-1}$ from $G_2^{i}$ for $i \in \{1, \ldots, n\}$. We show that then there exists a distinguisher \advb that can break the DDH assumption.

Distinguisher \advb simulates the indistinguishability game for \adv by simulating honest users. It answers \OraclePrepareReg and \OracleFinishReg queries as before. At the start, \advb invokes its DDH oracle to obtain a DDH challenge $(X = \generator^x, Y = \generator^y, Z = \generator^z)$.

As part of the reduction, \advb programs the random oracle $H$. To do so, it guesses the epoch $\epo_i$ that will be asked in the $i$th query to \OracleShowUp. Assume without loss of generality that all queries to $H$ are unique. On input an epoch $\epo \neq \epo_i$, \advb picks $r_{\epo} \randin Zq$ and returns $\generator^{r_{\epo}}$ as the hash value. On input $\epo_i$ it returns $X$ (from the DDH challenge).

We focus on how \advb computes the tag $\tagH$ in for the $j$th $\OracleShowUp(\id, \epo, \BlockList)$ query. If $j < i$, \advb samples a fresh $z_{\epo}^{\id} \randin \Zq$ and sets $\tagH = g^{z_{\epo}^{\id}}$. To answer the query $j = i$, it proceeds as follows. If \advb guessed the wrong epoch $\epo_i$ above, it aborts. Otherwise, let $\id_i$ be the household queried.
It answers the $j$th query by returning $\tagH = Z$.
Going forward, $\advb$ will act as if the household key $\kHi{\id_i} = y$ despite not knowing $y$ (recall, $Y = \generator^y$ it is part of the DDH challenge).
Earlier \OracleShowUp queries for $\id_i$ used random tags, so knowing $\kHi{\id_i}$ was not needed. To answer a $j > i$ query to \OracleShowUp for household $\id_i$ and epoch $\epo$, \advb sets $\tagH = Y^{r_{\epo}}$. Recall that $H(\epo) = \generator^{r_{\epo}}$, therefore $\tagH = H(\epo)^{y} = H(\epo)^{\kHi{\id_i}}$ as required. For all other households $\id$, \advb simply uses the household key $\kHi{\id}$ to compute $\tagH = H(\epo)^{\kHi{\id}}$.

Observe now that if $z = xy$ in the DDH challenge, we have that for the $i$th query \advb sets $\tagH = Z = X^y = H(\epo_i)^{\kHi{\id_i}}$ as in game $G_2^{i - 1}$. And if $z \randin \Zq$ we have that $\tagH = g^z$ for $z \randin \Zq$ as in game $G_2^{i}$.
Therefore, any advantage that \adv has in distinguishing $G_2^{i - 1}$ from $G_2^{i}$ results in \advb being able to break the DDH assumption. We conclude that $G_2$ and $G_3$ are indistinguishable.

\game{4} As in game $G_3$, but now we also replace the tag $\tagH$ returned in the challenge response by a random value. The proof is exactly the same as in the previous step, but now requires only a single hop.

To conclude the proof, we argue that the advantage of the adversary in game $G_4$ is negligible. Recall that \adv receives $(\ent^b, \tagH^b, \entproof^b=(\pi_s^b, \com^b, r^b))$, we argue all of these values are independent of the challenge bit $b$. By design, $\ent^0 = \ent^1$ and therefore $\ent^b$ as well as the commitment $\com^b$ with randomness $r^b$ are independent of $b$. As per the changes in game $G_1$, the proof $\pi_s^b$ is independent of $b$ as well. Finally, as per the changes in games $G_2$ through $G_4$, the tag $\tagH^b$ is  random group element, and therefore also independent of $b$. The result follows.
\end{proof}

\subsection{Auditability}\label{subsec:auditability}
Aid-distribution systems should be auditable (\reqlink{auditS}): a malicious distribution station should not be able to convince an auditor that it provided more aid than what corresponds to the legitimate recipients that showed up. Since a malicious registration station could simply enroll more recipients as legitimate, making auditing moot, we assume the registration station is honest. We model auditability using the experiment in Algorithm~\ref{alg:aud}.
The adversary can register honest recipients using $\OracleHonestReg$. The oracle returns recipients' revocation value to enable blocklisting of those recipients.
To register malicious recipients, the adversary uses $\OracleMalUserReg$.
We model the fact that smart cards are TEEs by making the $\OracleMalUserReg$ oracle unavailable to smart card attackers.

\begin{algorithm}[tbp]
  \caption{Auditability experiment}\label{alg:aud}
  \begin{algorithmic}[1]
    \small
    \Function{$\expaud$}{$\secl$}
      \State $\param \leftarrow \GlobalSetup{\secl}$
      \State $\skrs, \pkrs \leftarrow \SetupRS{\secl}$
      \State $\epo^{*}, \entsum^{*}, \AuditProof^{*}, \BlockList^{*} \leftarrow \adv^{\OracleHonestReg(\cdot),\allowbreak \mathcal{O}_{\code{MalUserReg}}(\cdot),\allowbreak \mathcal{O}_{\code{Showup}}(\cdot)}(\pkrs)$\label{alg:aud:forgery}
      \State $\code{valid} \gets \VerifyA{\pkrs, \epo^{*}, \entsum^{*}, \AuditProof^{*},\allowbreak \BlockList^{*}}$ \label{alg:aud:check-proof}
      \State $\mathcal{R}_{\text{mal}} = \{ \id \mid \Rev[\id] \in \BlockList^{*} \land \id \in \mathcal{M} \}$ \label{alg:aud:revoked}
      \State $\entmax \gets \entsum[\epo^{*},\BlockList^{*}] + \sum_{\id \in \mathcal{M} \setminus \mathcal{R}_{\text{mal}}} \Ent[\id]$ \label{alg:aud:max-ent}
      \State return $\code{valid} \land (\entsum^{*} > \code{ent}_{max})$
    \EndFunction
  \end{algorithmic}
\end{algorithm}

At the end, the adversary produces a target epoch $\epo^*$, entitlement $\entsum^{*}$, proof $\AuditProof^{*}$, and blocklist $\BlockList^{*}$ (line~\ref{alg:aud:forgery}). The game checks if the produced proof is valid (line~\ref{alg:aud:check-proof}) and determines the maximum explainable entitlement $\entmax$. The maximum consists of two terms (line~\ref{alg:aud:max-ent}): (1) the non-revoked honest users for which the adversary called $\OracleShowUp$ for epoch $\epox$ and blocklist $\BlockList^*$, and (2) the non-revoked malicious users the adversary controls.
The adversary wins if the proof verifies, and the total entitlement exceeds $\entmax$.

\begin{definition}
An aid-distribution system is auditable if the following probability is negligible:
\[
    \code{Succ}^{\text{AUD}}(\adv) = \prob\left[\expaud(\secl) = 1 \right].
\]
\end{definition}

\begin{theorem}
  \label{theo:card-auditability}
  The smart card system is auditable provided that
  the digital signature scheme is unforgeable and
  the discrete logarithm assumption holds in $\group$.
\end{theorem}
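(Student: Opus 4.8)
The plan is to show that any adversary $\adv$ winning the auditability experiment yields either a forgery against the digital signature scheme or the discrete logarithm $\log_{\generator}\pedersengenh$. The first observation is that, since smart cards are modeled as TEEs, the oracle $\OracleMalUserReg$ is unavailable; hence the set of malicious recipients is empty and the bound simplifies to $\code{ent}_{max} = \entsum[\epo^{*}]$, the total entitlement accumulated over all honest $\OracleShowUp$ calls made in epoch $\epo^{*}$. So it suffices to rule out $\entsum^{*} > \entsum[\epo^{*}]$ for a verifying $\AuditProof^{*}$.

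I would begin with a reduction to \textbf{signature unforgeability}. Construct $\cha$ that receives a verification key $\pk$ and a signing oracle from the EUF-CMA challenger, uses $\pk$ as the registration station's public key, and simulates the game. The key point is that $\sk$ is used \emph{only} when an honest card signs $\tagH \parallel \epo \parallel \com \parallel \hBL$ inside $\ShowupT{\cdot}$, and that the response $\res$ carrying $\sk$ is never exposed to $\adv$ precisely because $\OracleMalUserReg$ is absent. Thus $\cha$ simulates every honest card without knowing $\sk$: it keeps a placeholder signing key and routes each card signature to its oracle, recording a table mapping every signed message to the opening $(\ent, r)$ of the commitment it carries. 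When $\adv$ returns a verifying $\AuditProof^{*} = (r_{sum}, (\signatureAudit^{(i)}, \tagH^{(i)}, \com^{(i)})_i, \hBL)$, every $\signatureAudit^{(i)}$ is valid on $\tagH^{(i)} \parallel \epo^{*} \parallel \com^{(i)} \parallel \hBL$; any such message not in the table is a forgery. So, up to a negligible term, each triple coincides with a message actually signed by an honest card during some $\OracleShowUp$ call in epoch $\epo^{*}$, and $\cha$ knows its opening. In particular, because the signature binds $\com^{(i)}$, $\adv$ cannot substitute a commitment of its own choosing.

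Next I would invoke the \textbf{binding} of Pedersen commitments. Using the recorded openings, $\com^{(i)} = \generator^{\ent^{(i)}}\pedersengenh^{r^{(i)}}$, so $\prod_i \com^{(i)} = \generator^{\sum_i \ent^{(i)}}\pedersengenh^{\sum_i r^{(i)}}$. The final audit check requires $\prod_i \com^{(i)} = \comcommit(\entsum^{*}, r_{sum}) = \generator^{\entsum^{*}}\pedersengenh^{r_{sum}}$. If $\entsum^{*} \neq \sum_i \ent^{(i)}$, equating the two representations gives $\generator^{\entsum^{*} - \sum_i \ent^{(i)}} = \pedersengenh^{\sum_i r^{(i)} - r_{sum}}$ with a nonzero exponent on $\generator$, from which $\cha$ extracts $\log_{\generator}\pedersengenh$ and breaks the discrete logarithm assumption in $\group$. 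Hence, except with negligible probability, $\entsum^{*} = \sum_i \ent^{(i)}$.

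The remaining and, I expect, most delicate step is the \textbf{counting}: bounding $\sum_i \ent^{(i)}$ by $\entsum[\epo^{*}]$. The audit check forces the $\tagH^{(i)}$ to be pairwise distinct. Each honest $\OracleShowUp$ call in epoch $\epo^{*}$ signs exactly one message, carrying a single tag; hence distinct triples, having distinct tags, originate from distinct calls, and the map $i \mapsto c_i$ sending triple $i$ to the call producing its signature injects into the honest $\OracleShowUp$ calls of epoch $\epo^{*}$. Because an honest card commits to its true household entitlement, $\ent^{(i)}$ equals exactly the value that call $c_i$ added to $\entsum[\epo^{*}]$. Since entitlements are non-negative and $\entsum[\epo^{*}]$ accumulates the entitlement of \emph{every} such call with multiplicity, summing over the image of the injection gives $\sum_i \ent^{(i)} \le \entsum[\epo^{*}]$. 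Combined with $\entsum^{*} = \sum_i \ent^{(i)}$, this contradicts the win condition $\entsum^{*} > \entsum[\epo^{*}] = \code{ent}_{max}$, so $\adv$'s success probability is negligible. The subtleties to get right here are the non-negativity of entitlements (so that restricting to a subset of calls cannot inflate the sum) and the injectivity argument despite several cards of one household sharing a tag.
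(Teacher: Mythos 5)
Your proof is correct and follows essentially the same approach as the paper's (much terser) proof sketch: signature unforgeability forces every audit entry to originate from an honest card, tag uniqueness gives at most one entry per \OnlyNameShowupT call, and the binding of Pedersen commitments (via discrete log) ties the claimed total $\entsum^{*}$ to the honest entitlements. Your version merely makes explicit the EUF-CMA reduction (exploiting that $\res$, and hence \sk, is never exposed once $\OracleMalUserReg$ is removed), the DL extraction, and the injective counting argument that the paper leaves implicit.
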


\begin{proof}[Proof of \cref{theo:card-auditability}]
  We show that no adversary \adv can win the \expaud game in \Cref{alg:aud}. Recall that because each smart card is a TEE, the adversary is not allowed to call $\OracleMalUserReg$. Recall that the audit proof $\AuditProof^*$ produced by the adversary is of the form
  \begin{equation*}
    \AuditProof^* = (\rsum^{*}, (\signatureAudit^{(i)}, \tagH^{(i)}, \com^{(i)})_i, \hBL^{*}),
  \end{equation*}
  and the audit check on $\AuditProof^*$ for total entitlement $\entsum^*$, epoch $\epox$ and blocklist $\BlockList^*$ passes.
  
  At a high level, the adversary has two ways to win the game: (1) by forging a
  new record with a valid signature that can pass the verification of an honest
  auditor; and (2) by finding another opening to the product of Pedersen commitments. We prove that neither of the two is possible.
  
  First, we argue that each signature $\signatureAudit^{(i)}$ must have been produced by an honest token (e.g., using a call to $\OracleShowUp$). Suppose not, then \adv must break the unforgeability of the signature scheme. More formally, suppose one of the signatures was not produced by an honest token, then we construct an adversary \advb against the unforgeability of the signature scheme. Adversary \advb receives as input the challenge verification key $\pk$ which it passes to \adv in line~\cref{alg:aud:forgery} of the \expaud game. For every call by adversary \adv to $\OracleShowUp$ the signature-adversary \advb follows the protocol, except that it uses its signing oracle to compute the signature. At the end, \adv outputs an audit proof $\AuditProof^*$. If at least one signature was not produced by an honest token in the original game, then $\AuditProof^*$ contains a forged signature (on a new message because all $\tagH$s are different). Adversary \advb outputs this signature to win the unforgeability game.

  We conclude that all signatures were produced by honest tokens. 
Honest users whose revocation value $\RevokeId$ appear on the blocklist $\BlockList^*$ do not produce any signatures (and these thus cannot show up in the forged audit proof). Unblocked honest tokens, do produce a signature, but will do so only once for each epoch, including the target epoch $\epox$. And these signatures contain commitments $\com^{(i)}$ to the actual entitlements of these honest users.

  Suppose now that \adv wins the $\expaud$ game without forging signatures. Because each commitment $\com^{(i)}$ in $\AuditProof^*$ must have been produced for epoch $\epox$ by an honest unrevoked user, we know that $\com = \prod_i \com^{(i)}$ must commit to a value $\entsum$ that is less or equal to $\entmax$. Let $\rreal$ such that $\com = \comcommit(\entsum, \rreal)$. Since \adv wins the game, it must output a values $\entsum^*$ and $\rsum^*$ such that $\com = \comcommit(\entsum^*, \rsum^*)$ and $\entsum^* > \entmax$. Therefore $\entsum \neq \entsum^*$, so the adversary has produced two openings of commitment $\com$. However, Pedersen commitments are binding assuming the discrete logarithms assumption, therefore such an adversary cannot exist.
\end{proof}

\begin{theorem}
  \label{thm:aud:smartphone}
  The smartphone system is auditable provided that
  the ABC scheme is unforgeable and the discrete logarithm holds in $\group_1$.
\end{theorem}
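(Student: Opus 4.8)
The plan is to mirror the smart-card auditability argument, replacing the unforgeability of signatures by the unforgeability of the ABC scheme together with knowledge soundness of the distribution proofs $\pi_s^{(i)}$, and to again pin down the claimed total via the binding of the Pedersen commitments; the one structural difference from the card case is that here malicious users exist, so the bound must account for the $\sum_{\id \in \mathcal{M}}\Ent[\id]$ term. Concretely, suppose $\VerifyA{\pkrs, \epo^{*}, \entsum^{*}, \AuditProof^{*}, \BlockList^{*}}$ returns $\top$ for $\AuditProof^{*} = (r_{sum}, (\pi_s^{(i)}, \tagH^{(i)}, \com^{(i)})_i)$. Then every $\pi_s^{(i)}$ verifies against epoch $\epo^{*}$, the tags $\tagH^{(i)}$ are pairwise distinct, and $\prod_i \com^{(i)} = \comcommit(\entsum^{*}, r_{sum})$. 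I would use knowledge soundness of the NIZK (in the random oracle model for $H$, as in the distribution-phase proofs) to extract, for each $i$, a witness $(C^{(i)}, \kH^{(i)}, \ent^{(i)}, \RevokeVal^{(i)}, r^{(i)})$ with $C^{(i)}$ a valid credential on $(\kH^{(i)}, \ent^{(i)}, \RevokeVal^{(i)})$, with $\com^{(i)} = \comcommit(\ent^{(i)}, r^{(i)})$, and with $\tagH^{(i)} = H(\epo^{*})^{\kH^{(i)}}$.

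The steps then proceed as follows. First, by unforgeability of the ABC scheme, each extracted credential $C^{(i)}$ must have been issued during a call to $\OracleHonestReg$ or $\OracleMalUserReg$; credentials on never-issued attribute tuples occur only with negligible probability, so each $\ent^{(i)}$ is a genuine issued entitlement. Second, since $H(\epo^{*})$ is (except with negligible probability) a generator of $\group_1$, distinct tags force distinct secrets $\kH^{(i)}$, so each registered household contributes at most one entry. I would then split the entries into those whose $\kH^{(i)}$ matches a malicious registration and those matching an honest one: the malicious entries contribute at most $\sum_{\id \in \mathcal{M}}\Ent[\id]$, since each malicious credential appears at most once; the honest entries, argued below, contribute at most $\entsum[\epo^{*}]$. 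Finally, computational binding of Pedersen commitments under the discrete-logarithm assumption in $\group_1$ turns $\prod_i \com^{(i)} = \comcommit(\entsum^{*}, r_{sum})$ into $\entsum^{*} = \sum_i \ent^{(i)}$, so combining the two bounds yields $\entsum^{*} \le \entsum[\epo^{*}] + \sum_{\id \in \mathcal{M}}\Ent[\id] = \code{ent}_{max}$, contradicting the win condition $\entsum^{*} > \code{ent}_{max}$.

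The main obstacle I expect is making the honest-entry bound airtight, i.e., ruling out that the adversary fabricates or re-uses a proof for an honest household that never showed up in $\epo^{*}$. The subtlety is a replay-versus-forgery dichotomy: a \emph{freshly} produced verifying proof would, by knowledge soundness, expose the honest user's secret $\kH$ and credential, but $\kH$ is sampled uniformly, kept inside the oracle state, and only ever surfaces through the tag $H(\epo)^{\kH}$, so extracting it contradicts the discrete-logarithm assumption (and, for an epoch in which the user never showed up, is information-theoretically impossible). Any \emph{non-fresh} proof must therefore be an $\OracleShowUp$ output, and the binding of $\epo$ into the proof statement (via the Fiat--Shamir challenge) forces such a proof to belong to $\epo^{*}$; the $\lastepoch$ check guarantees each honest user shows up at most once per epoch, so its entitlement was accumulated exactly once in $\entsum[\epo^{*}]$. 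Handling this dichotomy cleanly, together with the negligible error terms from extraction and from tag/attribute collisions, is where the care concentrates; the commitment-binding and credential-unforgeability reductions are otherwise routine.
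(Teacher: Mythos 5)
Your proposal is correct and follows essentially the same route as the paper's proof sketch: ABC unforgeability forces every audit entry to originate from an honest or malicious recipient, tag uniqueness limits each household (i.e., each $\kH$) to one entry, and computational binding of Pedersen commitments under the discrete-logarithm assumption in $\group_1$ pins $\entsum^{*}$ to the sum of legitimate entitlements, contradicting $\entsum^{*} > \code{ent}_{max}$. The extra machinery you supply --- knowledge-sound extraction of witnesses from the $\pi_s^{(i)}$ and the replay-versus-forgery dichotomy bounding the honest contribution by $\entsum[\epo^{*}]$ --- is exactly the detail the paper's sketch compresses into ``it cannot create valid proofs except by interacting with an honest or malicious recipient'' and ``the proof follows as before.''
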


\begin{proof}[Proof of \cref{thm:aud:smartphone}]
  The auditability proof for the smartphone system proceeds similarly to the proof for the smart card system: to win, the adversary must either forge an attribute-based credential, or it must break the binding property of Pedersen commitments. Neither is possible.

  More formally, we show that no adversary \adv can win the \expaud game in \Cref{alg:aud}. Contrary to the smart card solution, the adversary is allowed to call $\OracleMalUserReg$. The audit proof $\AuditProof^*$ produced by the adversary is of the form:
\begin{equation*}
    \AuditProof^* = (\rsum^{*}, (\pi_s^{(i)}, \tagH^{(i)}, \com^{(i)})_i).
  \end{equation*}
  and the audit check on $\AuditProof^*$ for total entitlement $\entsum^*$, epoch $\epox$ and blocklist $\BlockList^*$ passes.
  
  We first focus on the zero-knowledge proofs
  \begin{multline*}
    \pi_s^{(i)} = \nizk\big\{ (C, \kH, \ent, \RevokeVal, r) : \tagH^{(i)} = H(\epox)^{\kH} \land\\
    C(\kH, \ent, \RevokeVal) \,\land\, \com^{(i)} = \code{Commit}(\ent, r) \\
    \land\, \forall (h,H) \in \BlockList^*: H \not= h^{\RevokeVal} \big\}.
  \end{multline*}
  We argue that $\pi_s^{(i)}$
  must be of one of two types: (1) the proof contains a credential showing proof for a credential that was issued to an honest user, or (2) it contains a showing proof for a credential that was issued using $\OracleMalUserReg$ to a corrupted user. Suppose to the contrary that $\pi_s^{(i)}$
  contains a showing proof for a different credential. Then, \adv must have produced a forgery for the underlying ABC scheme, which we assumed is not possible.
  
  So, each proof $\pi_s^{(i)}$
  must be over a credential that was issued by the registration station (controlled by the challenger). 
Because of the soundness of the zero-knowledge proofs, we therefore know that all $\com^{(i)}$ must commit to either entitlements of honest households $\id$ for which $\OracleShowUp$ was called in epoch $\epox$ or to a malicious household. Moreover, the blocklist part of the proof ensures that none of these households were on the blocklist $\BlockList^*$. And finally, the fact that all tags $\tagH^{(i)}$ are different ensures that no single household's credential is repeated.
  
We now show that if the adversary can win the \expaud game without forging credentials, then it must break the binding property of the commitment scheme. Let $\com = \prod_i \com^{(i)}$ be the combined commitment. Because the audit proof verifies, we know that $\com = \comcommit(\entsum^*, \rsum^{*})$. We now construct another opening for $\com$. First, for all honest users, we know the values $\ent^{(i)}, r^{(i)}$ such that $\com^{(i)} = \comcommit(\ent^{(i)}, r^{(i)})$. For the malicious users, we can find these openings by applying the knowledge extractor to the zero-knowledge proofs. (In our case, we use non-interactive zero-knowledge proofs in the random oracle model, the witnesses can therefore be extracted by rewinding the adversary.) Let $\ent = \sum_i \ent^{(i)}$ and $\rreal = \sum_i r^{(i)}$, then $\com = \comcommit(\entsum, \rreal)$ by construction. However, we know that each of these commitments was produced by a registered (honest or malicious) user that was not revoked, hence $\entsum \leq \entmax$, while $\entsum^{*}> \entmax$, so $\entsum \not= \entsum^{*}$. Therefore we have constructed two openings for $\com$ which violates the binding property of the commitment scheme. However, Pedersen commitments are binding assuming the discrete logarithm assumption holds, therefore such an adversary cannot exist.
  \end{proof}

\subsection{Entitlement Privacy Game}
An aid-distribution system should not unnecessarily disclose sensitive information of recipients to auditors (\reqlink{auditP}). By design, the auditor learns the sum of entitlements and the number of legitimate recipients. We model that the auditor does not learn \emph{more} using a two-world game inspired by \emph{Benaloh’s ballot privacy game}~\cite{BernhardCGPW15}, see Algorithm~\ref{alg:ent}.

In this game, the registration station and distribution station are honest. As before, the adversary can request the creation of honest recipients using $\OracleHonestReg$. The key difference is that the game models \emph{two} (parallel) distribution stations. The adversary can request that honest recipients show up to these stations using the oracle $\OracleShowUpTwo$, which takes as input two recipient identifiers, who interact with the respective distribution stations. The $\OracleShowUpTwo$ oracle tracks which blocklist the adversary uses in each epoch.

In the entitlement privacy experiment, the adversary outputs a challenge epoch $\epo^*$ (line~\ref{alg:ent:target}). The two distribution stations will produce their audit proofs for $\epo^*$ (lines~\ref{alg:ent:proof-one}--\ref{alg:ent:proof-two}). Recall that the auditor always learns the total entitlement, the number of recipients and the blocklist used. Therefore, the adversary loses if these are not the same in the two worlds (lines~\ref{alg:ent:worlds-same}--\ref{alg:ent:world-fail}). 
(Note that requiring that $\BlockList$ is the same in each call to $\OracleShowUpTwo$ is not sufficient to prevent trivial wins because a call to \OnlyNameShowupT could fail in only one of the two worlds. This difference would be detectable.)
Finally, the auditor receives one of the proofs and needs to guess which of the two it is (lines~\ref{alg:ent:guess}--\ref{alg:ent:win}). Any extra leakage will let the adversary win the game.

\begin{algorithm}[tbp]
    \caption{Entitlement privacy experiment}\label{alg:ent}
    \begin{algorithmic}[1]
      \small
      \Function{$\expent$}{$\secl$}
        \State $\param \leftarrow \GlobalSetup{\secl}$
        \State $\skrs, \pkrs, \leftarrow \SetupRS{\secl}$
\State $\epo^{*} \leftarrow \adv^{\mathcal{O}_{\code{HonestReg}}(\cdot), \mathcal{O}_{\code{ShowupTwo}}(\cdot)}(\pkrs)$ \label{alg:ent:target}
        \State $\entsum^0, \AuditProof^0 \leftarrow \CreateAuditProofDS{\epo^{*}, \transcript^0, \BlockLists[\epo]^{*}}$ \label{alg:ent:proof-one}
        \State $\entsum^1, \AuditProof^1 \leftarrow \CreateAuditProofDS{\epo^{*}, \transcript^1, \BlockLists[\epo]^{*}}$ \label{alg:ent:proof-two}
        \If {$\entsum^0 \neq \entsum^1$ \textbf{or} $|\AuditProof^0| \neq |\AuditProof^1|$ \textbf{or}\\
             \phantom{aaaa}$\lvert\BlockLists[\epox]\rvert > 1 $} \label{alg:ent:worlds-same}
            \State return $\bot$ \label{alg:ent:world-fail}
        \EndIf
        \State $b' \leftarrow \adv(\AuditProof^b)$ \label{alg:ent:guess}
        \State return $b' = b$ \label{alg:ent:win}
      \EndFunction
    \end{algorithmic}
\end{algorithm}

\begin{definition}
Aid-distribution systems provide entitlement privacy if the following advantage is negligible:
\[
    \code{Adv}_{\adv}^{\text{ENT}} = \left|\prob\left[\expent[0](\secl)\right] - \prob\left[\expent[1](\secl)\right]\right|.
\]
\end{definition}

\begin{theorem}\label{thm:card_privacy_ent}
  The smart card system maintains entitlement privacy against a malicious auditor assuming the commitment scheme is hiding.
\end{theorem}

\begin{proof}[Proof of \cref{thm:card_privacy_ent}]
Consider the entitlement privacy experiment \expent in \Cref{alg:ent}. By making calls to the \OracleShowUpTwo oracle, the adversary constructs two parallel worlds. As a result of the check in line~\ref{alg:ent:worlds-same} we know that the total entitlements for the challenge epoch $\epox$ must be the same in both worlds, i.e., $\entsum := \entsum^0 = \entsum^1$ and that the audit proofs contain the same number of records.

Recall that the audit proof is given by
\begin{equation*}
\AuditProof^b = (\rsum^{b}, (\signatureAudit^{(b,i)}, \tagH^{(b,i)}, \com^{(b,i)})_i, \hBL^b).
\end{equation*}
By construction, we must have that
\begin{equation}
\label{eq:commitment-sum}
\comcommit(\entsum, \rsum^b) = \prod_{i=1}^{n_{\epo}} \com^{(b,i)}
\end{equation}
and $\hBL := \hBL^0 = \hBL^1$. We proceed by game-hopping to show that the adversary cannot use the audit proof $\AuditProof^b$ to determine the value $b$. When the changes in these game hops are viewed together, they simulate an audit proof $\AuditProof$ using just the total entitlement.

\game{0} Game $G_0$ is as \expent.

\game{1} Game $G_1$ is as game $G_0$, except that the challenger directly computes the signatures $\signatureAudit^{(b,i)}$ (using the honest users' signing key $\sk$) as
\begin{equation*}
\signatureAudit^{(b,i)} = \code{Sign}(\sk, \tagH^{(b,i)} \parallel \epox \parallel \com^{(b,i)} \parallel \hBL).
\end{equation*}
The challenger computes the signatures in exactly the same way as the tokens, therefore this change is not detectable by the adversary.

\game{2} Game $G_2$ is as game $G_1$ but now the challenger sets
\begin{equation*}
\com^{(b,1)} = \comcommit(\entsum, \rsum^b) / \prod_{i=2}^{n_{\epo}} \com^{(b,i)}
\end{equation*}
without modifying $\entsum$ or $r_{sum}^b$. This change is syntactic, therefore the distributions of $G_2$ and $G_1$ are identical.

\game{3} Game $G_3$ is as game $G_2$ except that the challenger replaces the commitments $\com^{(b,i)}$ for $i \geq 2$ with fresh commitments to 0. Note that by the change in game $G_2$ equation \eqref{eq:commitment-sum} still holds. We now show using a hybrid argument that \adv cannot detect this change.

Let $G_2^{i}$ be the game where commitments $\com^{(b,2)}, \ldots, \com^{(b,i)}$ have been replaced by fresh commitments to 0. Then $G_2 = G_2^{1}$ and $G_3 = G_2^{n}$. Assume by contradiction that \adv can distinguish $G_2^{i - 1}$ from $G_2^{i}$ for $i \in \{2, \ldots, n\}$. Note that in $G_2^{i - 1}$ we set
\begin{equation*}
\com^{(b, i)} = \comcommit(\ent^{(i)}, r^{(i)})
\end{equation*}
whereas in $G_2^{i}$ we set
\begin{equation*}
\com^{(b, i)} = \comcommit(0, r_i').
\end{equation*}
We make no further changes. Any adversary that distinguishes these two games must break the hiding property of the commitment scheme. Since the Pedersen commitments that we use are information theoretically hiding, we conclude that no adversary \adv can distinguish $G_2^{i - 1}$ from $G_2^{i}$. By the hybrid argument, no adversary can distinguish $G_3$ from $G_2$.

\game{4} Game $G_4$ is as game $G_3$ except that the challenger no longer computes tag $\tagH^{(b,i)} = \tagH = \prf{\kHi{i}}{\epo}$ but instead picks a fresh household key $\kHi{i}'$ and sets $\tagH^{(b,i)} = \prf{\kHi{i}'}{\epo}$. Note that this change is purely syntactic:
(1) by construction households are unique, and
(2) in the \expent game the adversary receives no other values computed using the household keys.

In game $G_4$, the audit proof $\AuditProof^b$ sent to the adversary is constructed independently of the bit $b$. Therefore, the result follows.
\end{proof}

\begin{theorem}\label{thm:phone_privacy_ent}
  The smartphone system maintains entitlement privacy against a malicious auditor assuming that the commitment scheme is hiding and the ABC disclosure proofs are simulatable in the random oracle model.
\end{theorem}
\begin{proof}[Proof of \cref{thm:phone_privacy_ent}]
This proof is similar to the entitlement privacy proof for the 
card solution. As before, we have that $\entsum := \entsum^0 = \entsum^1$; the audit proofs contain the same number of records; and
all individual zero-knowledge proofs are with respect to the same blocklist $\BlockList$.
In the smartphone-based solution, the auditor gets $(\entsum, \AuditProof)$ where 
\begin{equation*}
\AuditProof^b = (r_{sum}^{b}, (\pi_s^{(b,i)}, \tagH^{(b,i)}, \com^{(b,i)})_i).
\end{equation*}
By construction, we must have that \eqref{eq:commitment-sum} holds here as well.

We proceed by game-hopping to show that the adversary cannot use the audit proof $\AuditProof^b$ to determine the value $b$. As in the phone proof, we effectively construct a simulator for $\AuditProof^b$ that relies only on $\entsum$.

\game{0} Game $G_0$ is as \expent.

\game{1} Game $G_1$ is as game $G_0$ but instead the challenger simulates all zero-knowledge proofs
$\pi_s^{(b,i)}$ using only public information: the tag $\tagH^{(b,i)}$, the blocklist $\BlockList$, and the commitment $\com^{(b,i)}$. The adversary cannot detect this change (see $G_1$ in the proof of \Cref{thm:smartphone-ind}).

\game{2} We repeat the same change as in the previous proof for the smart phone solution. Game $G_2$ is as $G_1$ but now the challenger sets
\begin{equation*}
\com^{(b,1)} = \comcommit(\entsum, \rsum^b) / \prod_{i=2}^{n_{\epo}} \com^{(b,i)}
\end{equation*}
without modifying $\entsum$ or $\rsum^b$. This change is syntactic, therefore the distributions of $G_2$ and $G_1$ are identical.

\game{3} Game $G_3$ is as $G_2$ except that for $i > 1$ the challenger replaces the commitment $\com^{(b, i)}$ by a fresh commitment to zero. The same argument as for the smart card solution (using the hiding property of the commitment scheme) shows that the adversary cannot detect this change.

\game{4} Game $G_4$ is as $G_3$ except that challenger uses a fresh household key $\kHi{i}'$ to compute the tag $\tagH^{(b,i)} = H(\epox)^{\kHi{i}'}$. Also in the smart phone solution, no other values depend on the household keys (the registration and distribution station are both honest and do not collude with the adversary). Therefore, this is a syntactic change.

In game $G_4$, the audit proof $\AuditProof^b$ sent to the adversary is constructed independently of the bit $b$. Therefore, the result follows.
\end{proof}

\subsection{Security of Aid Distribution}

Aid-distribution systems should ensure that illegitimate recipients cannot receive aid (\reqlink{legitimate}), and that legitimate recipients cannot receive more aid than entitled (\reqlink{limit}). The security game, the theorems and proofs are very similar to the auditability proofs (Sect.~\ref{subsec:auditability}), but with more restrictions on the adversary's power. In the security game, the adversary controls only users, and must convince an \emph{honest} distribution station to hand out more aid than permitted. Because the adversary no longer controls the distribution station, we give it access to an explicit $\OracleVerifyEnt$ oracle in the security game in Algorithm~\ref{alg:sec} through which it can simulate users (honest or malicious) showing up at the distribution station.

The adversary can, as before, create malicious users (using $\OracleMalUserReg$) as well as honest users (using $\OracleHonestReg$). It can use \OracleShowUp to obtain show up data of honest users, which it could then feed into $\OracleVerifyEnt$. See line~\ref{alg:sec:query}. We model the fact that smart cards are TEEs, by making the $\OracleMalUserReg$ and $\OracleShowUp$ oracles 
unavailable to smart card attackers.

The oracle \OracleVerifyEnt tracks show-up events that the distribution station would accept in $\transcript$. The adversary's goal is to convince the distribution station to, for a given round and blocklist, to hand out more aid than the joint entitlement of all non-revoked malicious users controlled by the adversary \emph{and} non-revoked honest users that showed up. This model then covers: illegitimate recipients receiving aid; legitimate recipients receiving more aid than entitled; a group of colluding users receiving more aid than they are jointly entitled to; and revoked users receiving aid.

After interacting with oracles, the adversary outputs a target epoch $\epo^*$ and a blocklist $\BlockList^*$. First, the challenger computes the total amount of aid handed out in terms of successful calls to $\OracleVerifyEnt$ for this choice of blocklist (line~\ref{alg:sec:seen}). Then, as in the indistinguishability game, we compute the maximum allowable entitlement $\entmax$ in two parts (see line~\ref{alg:sec:max}). The $\entsum$ term captures the aid requested by non-revoked honest users using $\OracleShowUp$. The second term accounts for the non-revoked malicious users (see line~\ref{alg:sec:revoked}). If the accepted aid is larger than $\entmax$, the adversary wins.

\begin{algorithm}[tbp]
  \caption{Security experiment}\label{alg:sec}
  \begin{algorithmic}[1]
    \small
    \Function{\OracleVerifyEnt}{$\epo, \ent, \tagH, \entproof, \BlockList$}
      \If {$\VerifyDS{\pk, \epo, (\ent, \tagH,\entproof),\allowbreak \BlockList} \land \tagH \not\in \transcript$}
        \State $\transcript[\epo] \leftarrow \transcript[\epo] \cup \{(\ent, \tagH, \entproof, \BlockList)\}$ \label{alg:sec:logs}
      \EndIf
    \EndFunction
    \vspace{2mm}

    \Function{$\expsec$}{$\secl$}
      \State $\param \leftarrow \GlobalSetup{\secl}$
     \State $\sk, \pk, \leftarrow \SetupRS{\secl}$
      \State $\epo^{*}, \BlockList^* \leftarrow \adv^{\OracleHonestReg(\cdot),\allowbreak \OracleMalUserReg(\cdot),\allowbreak \OracleShowUp(\cdot), \OracleVerifyEnt(\cdot)}(\pk)$ \label{alg:sec:query}
      \State $\entseen \gets \sum \{ \ent \mid (\ent, \cdot, \cdot, \BlockList^*) \in \transcript[\epo^*] \}$ \label{alg:sec:seen}
      \State $\mathcal{R}_{\text{mal}} = \{ \id \mid \Rev[\id] \in \BlockList^{*} \land \id \in \mathcal{M} \} $ \label{alg:sec:revoked}
      \State $\entmax \gets \entsum[\epo^{*},\BlockList^{*}] + \sum_{\id \in \mathcal{M} \setminus \mathcal{R}_{\text{mal}}} \Ent[\id]$ \label{alg:sec:max}
      \State return $\entseen > \entmax$
    \EndFunction
  \end{algorithmic}
\end{algorithm}

\begin{definition}
An aid-distribution system is secure if the following probability is negligible:
\[
    \code{Succ}^{\text{SEC}}(\adv) = \prob\left[\expsec(\secl) = 1 \right].
\]
\end{definition}

We prove the security of distribution in both solutions using a reduction to auditability. 
\begin{theorem}\label{thm:sec_distribution}
  If a scheme is auditable, this scheme also provides security of distribution.
\end{theorem}  
\begin{proof}[Proof of \cref{thm:sec_distribution}]
  Assuming there exists an adversary \adv that breaks the security of distribution, i.e., \adv can win $\expsec$ in \cref{alg:sec} with non-negligible advantage, we construct an adversary \advb that breaks auditability by winning $\expaud$ in \cref{alg:aud} with at least the same advantage. 

  At the start of the auditability game, \advb receives \param and the public key $\pk$. It relays these to the security adversary \adv. Whenever \adv calls the three oracles, i.e., \OracleHonestReg, \OracleMalUserReg, and \OracleShowUp, \advb will relay the query to its challenger and send the response back to \adv. Whenever \advb receives a call to \OracleVerifyEnt from \adv, it executes this oracle as stated, and stores the result in $\transcript[\epo]$.

After the oracle queries, \adv outputs a chosen period $\epo^*$ and a target blocklist $\BlockList^*$. Auditability adversary \advb uses all recorded responses in $\transcript[\epo^*]$ to compute $\entsum^*, \AuditProof^*\leftarrow \CreateAuditProofDS{\epo,\allowbreak \transcript[\epo^*]}$ and finally outputs $(\epox, \entsum^*, \AuditProof^*, \BlockList^*)$. Assuming this audit proof is valid, \advb wins the audit game because $\entseen > \entmax$, and hence, $\entsum^{*} > \entmax$.

Finally, we show that $\AuditProof^*$ is valid. The verification oracle \OracleVerifyEnt and the verification function \OnlyNameVerifyA running by auditors do the same check on the output tuple of \OnlyNameShowupT functions. Hence, the winning condition for \adv and for \advb are essentially the same.
\end{proof}

\section{Performance Evaluation}
In this section, we evaluate the suitability of the instantiations in Sections~\ref{sec:card} and~\ref{sec:phone} for real deployment. We focus on the performance of the registration and the distribution phases. 
In particular, we focus on the computation and the communication operations that happen on the token and may affect user experience. 
We omit the setup and audit phases, because they can be done offline, and thus, are not a bottleneck.
Our code can be found here: \url{https://github.com/spring-epfl/not-yet-another-id-code}.

\vspace{1.5mm}\parBF{Smart-card-based Solution}
We implemented a prototype of the smart-card-based solution on a \emph{NXP J3H145 dual 144k} Java Card~\cite{JavaCardUserGuide21}.
We focus on the functions running on the smart card at registration and distribution, as the station can run on powerful hardware. The measurements include communication cost between card and reader. We report the mean over at least 5 runs. In all cases, the standard error of mean (SEM) is below 1\%.

At registration, the card receives $(\ent,\allowbreak \RevokeId)$, of 32 bytes each. Then, it runs \OnlyNameFinish and generates a 32-byte secret $\kH$. 
In total, computation and communication at registration runs in under 100\,ms.

We determine the cost of running \OnlyNameShowupT by measuring the cost of the individual operations and the transfer cost.
First, the card downloads the period $\epo$ (8 bytes) and the blocklist \BlockList. Assuming 512 entries of 32 bytes, transferring, checking, and hashing the blocklist take 3.3\,s. The time scales linearly in the length of the blocklist. 
Second, the card updates the counter \lastepoch and computes a 32-byte tag $\tagH$. We implement the PRF for computing $\tagH$ using AES in ECB mode. This step takes 113\,ms.
Third, the card computes the Pedersen commitment on the entitlement. Since the NXP J3H145 dual 144k card does not expose a direct elliptic-curve API, we use JCMathLib~\cite{MavroudisS20} and the card's Diffie-Hellman key exchange functionality to compute the commitment. Computing $\com$ takes 500\,ms. Finally, signing takes 166\,ms. Adding these together (except the first one), we estimate the baseline cost of $\OnlyNameShowupT$ to be around 779\,ms.

With 512 entries on the blocklist, the distribution protocol takes less than 5 seconds in total. This running time is smaller than any physical interaction happening at distribution time, and hence our protocol can fulfill \reqlink{scale}.

\vspace{1.5mm}\parBF{Smartphone-based Solution}
We implemented a Rust~\cite{Rust22} prototype of the smartphone solution. 
Our prototype includes the Pedersen Commitment scheme~\cite{TorbenNoninteractive91} and the Pointcheval-Sanders scheme~\cite{PointchevalS16} using SHA-256 as the hash function. 
In our experiments we use a \emph{Samsung Galaxy a40} smartphone with a \emph{Samsung Exynos 7 Octa 7904} chipset for recipients, and a computer with an \emph{Intel(R) Core(TM) i7-7600U CPU 2.80GHz} as the stations.
We report running times (averaged over 16 runs) and transfer costs. Transfer times will depend on the communication channel. For example, Bluetooth can achieve 1\,Mb/s. Appendix~\ref{subsec:channel} discusses other channels.

At registration the phone runs \OnlyNamePrepare and \OnlyNameFinish which take 4.6\,ms and 0.6\,ms respectively. The registration station runs \OnlyNameProcess which takes 0.8\,ms. The request sent from the phone to the station and the response from the station to the phone are both less than a few hundred bytes.

\begin{figure}[tbp]
    \centering
    \includegraphics[width=\columnwidth]{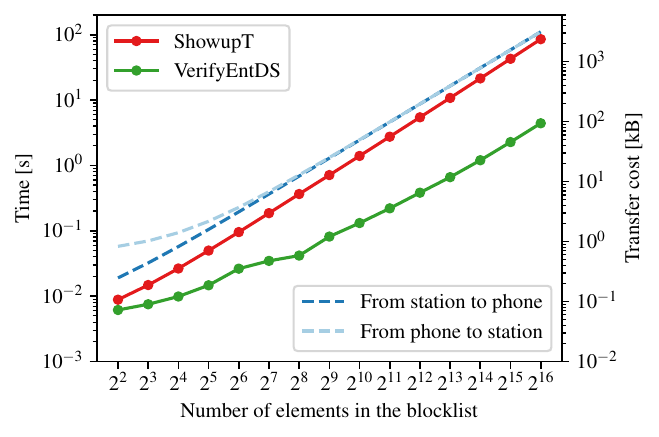}
    \caption{Smartphone solution: Performance at distribution. The computation cost of \OnlyNameShowupT (\textcolor{showupt}{red}), computation cost of \OnlyNameVerifyDS (\textcolor{verifyentds}{green}), the transfer cost of station$\rightarrow$phone (\textcolor{stationtophone}{blue}), and the transfer cost of phone$\rightarrow$station (\textcolor{phonetostation}{light blue}). All are linear in the length of the blocklist.}
    \label{fig:eval}
\end{figure}

\Cref{fig:eval} shows the computation cost of running \OnlyNameShowupT on the phone, the computation cost of verifying the proof (\OnlyNameVerifyDS) on the station, as well as the communication costs.
Even with 1024 blocked households, the whole distribution protocol can still finish within seconds, making this solution efficient enough for supporting aid distribution at scale (\reqlink{scale}).

\section{Practical Considerations}
\label{sec:practical}

\parBF{Multiple Registration and Distribution Stations}
While our design focuses on cases where there is only one registration and distribution station, humanitarian organizations may need to set up multiple registration and distribution sites for large-scale operations~\cite{Icrcecosec20}.  
For example, having multiple sites reduces the waiting time and therefore lowers the risk of attack for staff and recipients~\cite{Icrcecosec20}.
But having multiple sites can also affect the security and privacy offered of the system. 

When there are multiple registration stations, recipients from the same household can register several times and get goods multiple times per period. 
To avoid this problem, registration stations need access to the validation means of other registration stations (e.g., communication with the elders of another village).

When there are multiple distribution sites, recipients can ask for goods in more than one of them.
To avoid this problem, distribution stations need to synchronize their lists of seen tags $\tagH$ to learn which households already requested goods in a period. Synchronization requires communication, e.g., Internet connectivity or transportation of digital copies (e.g., in a USB stick) from one site to the other.
Communication may not be possible, or be difficult in some emergency scenarios where aid distribution takes place.

In absence of connectivity, assigning each household to a fixed station would prevent multiple-station-based double dipping at the cost of flexibility for recipients.

\parBF{Preventing Delegation of Tokens}
\label{subsec:auth}
One of the problems that the \icrcAbbr staff finds in the field is that recipients may willingly or unwillingly, give their entitlement proofs to others.
When this happens, illegitimate recipients gain access to goods. 
To avoid this issue, the distribution station needs to authenticate recipients to ensure that the recipient is indeed registered at the registration station.

The most popular authentication methods are not suitable in our case. 
For example, (graphical) passwords or tokens can be delegated and thus do not solve the problem.
Due to the ease of use and the inherent protection against delegation, biometrics are perceived by the humanitarian sector as a very advantageous solution.

Biometric-based authentication, while desirable, also raises concerns.
Biometrics contain private information about individuals, and they are not renewable. 
Once they are leaked or shared, those having the biometric data can use them to re-identify their owners forever.
Humanitarian organizations are already under pressure from state and non-state actors to share or disclose biometric data they have~\cite{HayesICRC19}. 
This is why, to reduce risks for recipients, the \icrcAbbr has a strict policy for the usage of biometrics~\cite{Icrcbiopolicy19} which recommends avoiding the creation of biometrics databases.
Such biometrics policy is in line with other relevant legal frameworks under which humanitarian organizations may have to operate, e.g., the EU General Data Protection Regulation~\cite{GDPR} or the African Union Convention on Cybersecurity and Personal Data Protection~\cite{AUdataprotection14}.

Our design can be easily adapted to support biometric authentication while respecting the \icrcAbbr policies.
It suffices to store the biometrics on the tokens at registration and implementing the biometric authentication \emph{inside} the token.
At the distribution station, the user can use the token to prove their identity.
To keep the system properties, it is necessary that the biometric sensors and devices realizing the computation (smart cards or smartphones) are trusted.

\parBF{Limits of technological solutions against delegation}
While biometrics (and other strong authentication mechanisms) can prevent delegation, they cannot fully prevent illegitimate recipients from getting access to aid. 
For example, recipients may be coerced to give out their aid. This problem is hard to solve with technology, as digital systems can check the authenticity or eligibility of the recipient, but not their willingness.

\section{Conclusion}

Humanitarian organizations see digitalization as an opportunity to increase their operations' efficiency and therefore increase their capability to help people in need.
However, they also recognize that the introduction of technology may bring new risks for the populations they serve\cite{KaspersenICRC16}.

In this paper we tackled the case of digitalization of aid-distribution programs.
In collaboration with the \icrcAbbr we have identified the requirements that a digital aid-distribution solution must fulfill to guarantee that it preserves the safety, rights, and dignity of aid recipients.
Then, we propose a decentralized aid-distribution system that enables humanitarian organizations to distribute physical goods at scale in a secure and privacy-preserving manner, while providing strong accountability.

We provide two instantiations of our design on two kinds of digital tokens already used in humanitarian contexts: smart cards and smartphones.
We formally prove that our schemes provide the required security and privacy properties, and we empirically demonstrate that, despite the use of advanced cryptography, they are efficient enough to support mass aid-distribution.

Our interactions with the \icrcAbbr reveal that, due to them dealing with the most vulnerable populations, the needs of humanitarian organizations often cannot be satisfied with off-the-shelf commercial solutions. 
Their requirements often bring up challenging research questions and open new design spaces rarely explored by our community.
We hope that our work fosters new collaborations between researchers and humanitarian organizations to explore this space so research innovations can directly benefit those most in need of help.

\section*{Acknowledgements}
\addcontentsline{toc}{section}{Acknowledgments}

We thank Laurent Girod for implementing the Java Card prototype and Nathan Duchesne for implementing the smartphone prototype.

This work was partially funded by the Science and Technology for Humanitarian Action Challenges (HAC) programme from the Engineering for Humanitarian Action initiative, a partnership between the ICRC, EPFL and ETHZ.
 
\bibliographystyle{IEEEtranS}
\bibliography{sources}

\appendices

\crefalias{section}{appendix}

\section{Privacy-friendly Smartphone-Station Channel}
\label{subsec:channel}
When using a smart card as a token, the communication between the token and the stations happens through a card reader. 
Such a communication channel is hard to eavesdrop, and as long as the card reader is honest, hard to compromise.
When using a smartphone as a token, the communication becomes wireless.
Wireless communications are is easy to eavesdrop on and, depending on the protocol, are also susceptible to person-in-the-middle attacks.
This new attack surface can have a strong impact on privacy.

We discuss four widely used communication channels suitable for connecting the smartphone to the station when deploying the system: cellular network, local Wi-Fi, Bluetooth, and QR-code scanning.
We compare them in terms of (1) ease of integration into a humanitarian aid-distribution system and (2) the privacy risks they introduce. 

\parBF{Cellular Network} 
Smartphones can use cellular networks to connect to the Internet. 
If the stations also have Internet connection, the phone can communicate with the station via this channel.
Using cellular networks has the advantage that it does not require the installation of dedicated hardware to support the communication.
However, one cannot always assume that the areas where humanitarian aid distribution takes place have reliable and high-bandwidth cellular connection system.

The use of cellular networks extends the threat model of the system to include the mobile service provider, as well as eavesdroppers with adequate equipment~\cite{RupprechtDHWP18}. These adversaries can link recipients, and hence, breach their privacy.

\parBF{Local Wi-Fi} 
An alternative for connecting smartphones and stations is to use a local Wi-Fi network to which both devices connect. This method requires the deployment and maintenance of Wi-Fi routers. Once these routers are deployed, quality of connection is easy to achieve.

Using a Wi-Fi eliminates the need to trust a mobile service provider. Instead, trust is put on the entity setting up the router to not track users~\cite{GaoL0QCQLGL21,YuLMZL20}. 
If this entity is not trustworthy, users need to have unlinkable MAC addresses, which are not available in all commercial devices~\cite{MartinMDFBRRB17}, and prevent fingerprinting of other information contained in the probe request~\cite{VanhoefMCCP16, GentryP16,FenskeBMMRR21}.

\parBF{Bluetooth} 
A third option is to use the Bluetooth Low Energy (BLE) technology~\cite{BaruaAHH22, CasarPST22}. It enables smartphones to connect to the station without intermediaries that could be adversarial.
Compared to Wi-Fi, BLE provides less bandwidth, but is still enough to run our protocols. On the negative side, BLE does not have the same prevalence on phones as Wi-Fi.

With regard to privacy, the BLE specification contains provisions to randomize the MAC address of the devices. 
However, enabling full randomization to ensure anonymity from the receiving end comes at the cost of having to pair the devices every time. 
Whether frequent re-pairing is acceptable depends on the periodicity of the distribution.

\parBF{QR-code Scanning}
QR codes are two-dimensional bar codes containing data~\cite{ISO18004}. Users can access the encoded data by scanning the QR code with the camera of their smartphones~\cite{KrombholzFKKHW14}. Due to their ease of deployment, QR codes are increasingly popular in smartphone settings~\cite{WahshehL20}, e.g., to provide links to social media accounts, to show vaccination certificates, or to share files. 

Because it requires the adversary to have direct line of sight to the QR code to capture the encoded data~\cite{WahshehL20}, QR-code scanning provides a high level of privacy. However, the storage capacity of a QR code is limited (a maximum of 3 kB~\cite{ISO18004}). Thus, more than one QR code may be needed to transmit all data necessary in our protocols. 
Even though we can rotate QR codes to transfer more data, the throughout is still low due to other constraints, e.g, the recognition delay of the phone, people unlocking the phone to point the camera to the code, etc. This may slow down the protocol's execution rendering this method impractical in reality (\reqlink{scale}).

\begin{table}[tbp]
    \centering
    \caption{Comparing Smartphone-Station Channel}
    \begin{tabular}{@{}l l l l l@{}} 
     \toprule
      & Cellular& Local & BLE & QR-code \\ 
      & Network & Wi-Fi &  &  \\ 
     \midrule
     Privacy & low & medium & medium & high\\   
     Throughput & high & high & medium & low \\
Infrastructure & \xmark & \cmark & \xmark & \cmark\\
     User-friendliness & \cmark & \cmark & \cmark & \xmark \\
     \bottomrule
    \end{tabular}
    \label{tab:channeldis}
\end{table}

Table~\ref{tab:channeldis} summarizes the performance of the four channels. If the blocklist is short and only requires low throughput of the channel, QR-code scanning can be a good option because of better privacy. If the blocklist is longer and requires larger throughput, BLE or local Wi-Fi would be a more desirable solution, depending on the ability of setting up Wi-Fi device. BLE has an advantage over Wi-Fi if devices can enable fully randomization of MAC addresses. Cellular network should be the last alternative due to the large privacy risk. In the case where the privacy risk is unacceptable, one option is to move to smart-card-based solution, another option is to shorten the distribution period to encourage having a shorter blocklist.
 \section{Cryptography Details}
\subsection{Digital Signatures}
Let $\Pi = (\dsgen,\allowbreak \dssign,\allowbreak \dsverify)$ be a signature scheme. We restate the existential unforgeability game $\code{Sig-forge}_{\adv, \Pi}(\ell)$ for adversary $\adv$ and security parameter $\ell$:
\begin{enumerate}
    \item The challenger runs $\dsgen(\secl)$ to obtain a signing-verification key-pair $(\sk, \pk)$.
    \item The adversary \adv is given $\pk$ and has access to the signing oracle $\dssign(\sk, \cdot)$.
    \item Eventually \adv outputs a forgery $(\sigma, m)$. Let $\mathcal{Q}$ denote the set of all message queries that \adv asked its oracle. The adversary succeeds if and only if (1) $\dsverify(\pk, \sigma, m) = \top$ and (2) $m \notin \mathcal{Q}$.
    \item The experiment outputs 1 if the adversary wins.
\end{enumerate}

A signature scheme $\Pi = (\dsgen,\allowbreak \dssign,\allowbreak \dsverify)$ is \emph{unforgeable} if for all probabilistic polynomial-time adversaries \adv, there is a negligible function \code{negl} such that: 
\[
   \prob \left[ \code{Sig-forge}_{\adv, \Pi}(n) = 1 \right] \leqslant \code{negl}(n).
\]

\subsection{Pseudorandom Functions}
Pseudorandom functions (PRFs) are ``random-looking functions'' which refer to the pseudorandomness of a distribution on functions~\cite{KatzLindell2014}. 
The set $\code{Func}_{n}$ are all functions mapping n-bit strings to n-bit strings.

\begin{definition}
    An efficient, length-preserving, keyed function $F: \{0, 1\}^{n} \times \{0, 1\}^{n} \rightarrow \{0, 1\}^{n}$ is a pseudorandom function if no probabilistic polynomial-time distinguisher $D$ can differentiate $F$ from $f$ such that $f \in \code{Func}_{n}$, i.e., 
    \[
    |\prob\,[D^{F_{k}(\cdot)}(1^{n}) = 1] - \prob\,[D^{f(\cdot)}(1^{n}) = 1]| \leqslant \code{negl}(n), 
    \] 
    where the first probability is taken over uniform choice of $k \in_{R} \{0, 1\}^{n}$ and the randomness of $D$. 
\end{definition}

\subsection{Pointcheval-Sanders Credentials}
\label{app:ps-abc}

There are three parties in an ABC scheme: \emph{the issuer}, \emph{the user}, and \emph{the verifier}. 
The issuer sets up the system and issues credentials to users. 
The process by which a user obtains a credential is called \emph{issuance}. 
The user holds credentials and shows them to the verifier. 
The user can choose to reveal some attributes from any number of the credentials to the verifier. 
The verifier checks the credential is valid, and the revealed attributes fulfill the requirements of the application. 
The process by which the user shows possession of a credential to the verifier is called \emph{verification}. 

A secure ABC scheme has the following properties: 
\begin{itemize}
    \item Unforgeability: It is not possible for any party in the system to forge a credential without the help of the issuer. 
    \item Unlinkability: It is not possible for the verifier (even when colluding with the issuer) to distinguish between two users who disclose the same attributes in the verification process. 
\end{itemize}

\subsubsection{Setup}
The issuer runs \code{ABC.Gen} to set up global parameters and generate keys. It proceeds as follows:

\begin{enumerate}
  \item The algorithm takes as input the security parameter $\secl$ and the number of attributes $L$.
  \item It generates public parameters containing a type-III bilinear group pair given by $\paramps = (e(\cdot,\allowbreak \cdot),\allowbreak \group_1,\allowbreak \group_2,\allowbreak \group_T,\allowbreak \grouporder,\allowbreak g_1,\allowbreak g_2,\allowbreak g_T)$ where $g_1, g_2, g_T$ are generators of the groups $\group_1, \group_2, \group_T$ of order $\grouporder$ respectively.
  \item It generate signing and verification keys. It picks $x, y_1, \dots, y_L \in_{R} \mathbb{Z}_{q}$, $g \in_{R} \mathbb{G}_{1}$ and $\tilde{g} \in_{R} \mathbb{G}_{2}$, then computes: 
    \[
    X = g^x, \tilde{X} = \tilde{g}^x, Y_i = g^{y_i}, \tilde{Y}_{i} = \tilde{g}^{y_i}
    \]
    for $i = 1, \dots, L$. It publishes the public key \pkps and returns the private key \skps, where 
    \begin{align*}
    \pkps &= (g, Y_1, \dots, Y_L, \tilde{g}, \tilde{X}, \tilde{Y}_{1}, \dots, \tilde{Y}_{L}) \\
    \skps &= (x, X, y_1, \dots, y_L).
    \end{align*}
\end{enumerate}

\subsubsection{Issuance}\label{app:abc:issue}
The user and the issuer jointly run \code{ABC.issue} protocol to create signatures as credentials. 
The issuer acts as a signer without knowing all the attributes it is signing. The protocol proceeds as follows:
\begin{enumerate}
    \item \textit{Input agreement.} The user and the issuer agree on the set of attribute indices $\mathcal{I} \subset \{1, \dots, L\}$ that are determined by the issuer and the set of attribute indices $\mathcal{U} \subset \{1, \dots, L\}$ that are determined by the user, where $\mathcal{I} \cup \mathcal{U} = \{1, \dots, L\}$. The user takes as input the public key \pkps and the attributes $a_i, \forall i \in \mathcal{U}$. The issuer takes as input the public key \pkps, the private key \skps, and the attributes $a_i, \forall i \in \mathcal{I}$. 
    
    \item \textit{User commitment.} The user commits to the attributes they want to include in the credential by picking $t \in_{R} \mathbb{Z}_{q}$ at random and computing the credential $C$ as well as a non-interactive zero-knowledge proof (NIZK) $\pi$ that proves $C$ has been computed correctly:
    \begin{align*}
    C &= g^{t} \prod^{}_{i \in \mathcal{U}} Y_{i}^{a_{i}}, \\
    \pi &= \nizk\left\{(t, (a_i)_{i \in \mathcal{U}}): C = g^{t} \prod^{}_{i \in \mathcal{U}} Y_{i}^{a_{i}} \right\}.
    \end{align*}
    The user sends an issuance request $\req_{\text{abc}} = (C, \pi)$ to the issuer.  
    
    \item \textit{Issuer signing.} The issuer receives the issuance request $\req_{\text{abc}}\allowbreak = (C,\allowbreak \pi)$, verifies the validity of proof $\pi$ with respect to commitment $C$, and aborts if the proof is not correct. Otherwise, the issuer picks $u \in_{R} \mathbb{Z}_{q}$ at random and creates the signature 
    \[
    \sigma' = \left( g^{u}, \left( XC \prod_{i\in \mathcal{I}} Y_{i}^{a_i}\right)^{u}\right).
    \]
    The issuer sends $\sigma'$ and the attributes chosen by the issuer $a_i, \forall i \in \mathcal{I}$ to the user. 
    
    \item \textit{Unblinding signature.} The user receives the signature $\sigma' = (\sigma_{1}', \sigma_{2}')$ and attributes. 
The user computes the ``unblinding'' signature
    \[
    \sigma = \left( \sigma_{1}', \frac{\sigma_{2}'}{(\sigma_{1}')^{t}}\right), 
    \]
    and uses the public key \pk to validate that $\sigma$ is a valid signature on the attributes. If valid, the user stores both $\sigma$ and the attributes. 
\end{enumerate}

When using the ABC scheme in our protocols for the smartphone solution, the phone additionally proves that the revocation value $\RevokeId = (\generator_{1}, \generator_{1}^{\RevokeVal}) = (\generator_{1}, \overline{\RevokeId})$ has been correctly formed. The issuer can trivially check that the first component equals the generator of $\group_1$. The token extends the proof of correctness $\pi$ above to show that the second component is correct. Instantiating the user-defined attributes at positions 1 and 3, it instead proves:
\begin{align*}
\pi &= \nizk\left\{(t, \kH, \RevokeVal): C = \generator^{t} Y_{1}^{\kH} Y_{3}^{\RevokeVal} \land \overline{\RevokeId} = \generator_{1}^{\RevokeVal} \right\}
\end{align*}

\subsubsection{Verification}
The user can run \code{ABC.show} to convince the verifier that the user possesses a valid credential over a set of attributes using a zero-knowledge proof. 
As part of this proof, the user can choose to reveal some of the attributes while hiding others from the verifier. 
The verifier checks and accepts the proof if all the checks succeed. 
The showing protocol proceeds as follows:
\begin{enumerate}
    \item \textit{Input agreement.} The user and the verifier agree on the public key of the issuer and the set of attributes $\mathcal{D}$ that should be disclosed to the verifier. 
    Let $\mathcal{H} = \{1, \dots, L\} \setminus \mathcal{D}$ be the set of attributes that are hidden from the verifier. 
    
    \item \textit{Proof creation.} The user takes as input a signature $\sigma = (\sigma_1, \sigma_2)$ over the attributes $a_1, \dots, a_L$ and picks random values $r_s, t_s \in_{R} \mathbb{Z}_{q}$. 
    The user computes a randomized signature $\sigma_s$ and a non-interactive zero-knowledge proof that proves the signature $\sigma_s$ is a valid signature:   
    \[
    \sigma_s = (\sigma_{s_{1}}, \sigma_{s_{2}}) = (\sigma^{r_s}_1, (\sigma_2 \sigma_{1}^{t_s})^{r_s}),
    \]
    \begin{equation*}
    \begin{aligned}
        \pi_s = \nizk\{ (t_s, (a_i)_{i \in \mathcal{H}}) & : \frac{e(\sigma_{s_{2}}, \tilde{g})\prod_{i \in \mathcal{D}} e(\sigma_{s_{1}}, \tilde{Y}_{i})^{-a_{i}}}{e(\sigma_{s_{1}}, \tilde{X})} \\
        & = e(\sigma_{s_{1}}, \tilde{g})^{t_s} \prod_{i \in \mathcal{H}} e(\sigma_{s_{1}}, \tilde{Y}_{i})^{a_i}. \}
    \end{aligned}
    \end{equation*}
    The user sends $(\sigma_s, (a_i)_{i \in \mathcal{D}}, \pi_s)$ to the verifier.

    \item \textit{Proof verification.} The verifier receives the signature $\sigma_s = (\sigma_{s_{1}}, \sigma_{s_{2}})$, the disclosed attributes $(a_i)_{i \in \mathcal{D}}$, and the proof $\pi_s$ from the user. 
    First, the verifier checks that $\sigma_{s_{1}} \neq 1_{\mathbb{G}_{1}}$, i.e., $\sigma_{s_{1}}$ is not the unity element in $\mathbb{G}_{1}$. 
    Second, the verifier checks the user has a valid signature under the public key $\pkps$ over the disclosed attributes $(a_i)_{i \in \mathbb{D}}$ by verifying the proof $\pi_s$. 
    The verifier accepts the disclosure proof only if both checks pass. 
\end{enumerate}

\begin{fullversion}
\end{fullversion}

\section{Card-whispering Protocol}\label{app-cardwhisper}

Recall that to provide robust distribution (\reqlink{robust}) it is necessary to enable more than one member of a household to retrieve their entitlement.
This is needed to support household members losing their cards or unable to attend the distribution due to sickness. 
Our solution achieves robustness by permitting that members of a household have ``cloned'' cards, i.e., cards that share the same state.

To create a card clone, registered household members need to bring their card to the registration station. 
Then, they run the following card-whispering protocol between a \emph{parent card} -- the card that the household member wants to clone, which contains the household secret and current state, and a \emph{child card} -- the card that will become a clone, which is initialized in the setup with the built-in shared keys and public parameters but does not contain any secret.
The parent card and child card are inserted into dedicated devices. The protocol goes as follows:

\begin{itemize}
    \item $\CardWhispering{}$ 
      The card whispering protocol is between a parent and a child card. The parent card takes as input its state $\stateT^{P} = (\lastepoch,\allowbreak \pk,\allowbreak \sk,\allowbreak \kH,\allowbreak \ent,\allowbreak \RevokeId)$, the client card has no input. The parent card sends $\stateT^{P}$ to the client. The client sets its own state $\stateT^{C}$ to equal $\stateT^{P}$. The parent card does not modify its state.
\end{itemize}

After running the protocol, the child card will have the same state as the parent card (including the same secret of the household). We have described the protocol between two cards, but it can be run among any number of cards.

As explained in \cref{sec:sc-based}, to maintain privacy and security this protocol must be conducted in a controlled environment where the owner of the parent card can be authenticated and requires (1) that only the child card learns the household secret (otherwise privacy is at risk, see \cref{sec:sc-based}), and (2) that the child card is given to a member of the household (otherwise security is at risk, as illegitimate recipients may get the entitlement of the household).

\end{document}